\newtheorem{defn}{Definition}
\newtheorem{obs}{Observation}
\newtheorem{thm}{Theorem}
\newtheorem{lem}{Lemma}
\begin{document}

%% Paper Title
%% You can use linebreaks \\ within to get better formatting as
%% desired.
\title{Quality Sensitive Price Competition in Spectrum Oligopoly}

%% Author names and affiliations:
%%
%% Avoiding spaces at the end of the author lines is not a problem with
%% conference papers because we don't use \thanks or \IEEEmembership.
%%
%% For several authors with only one affiliation:
%%
% \author{
%   \IEEEauthorblockN{Hui-Ting Chang and Stefan M.~Moser}
%   \IEEEauthorblockA{Department of Electrical and Computer Engineering\\
%     National Chiao Tung University (NCTU)\\
%     Hsinchu, Taiwan\\
%     Email: \{email-of-hui-ting,email-of-stefan\}@ieee.org}
% }
%%
%% For up to three affiliations:
%%
\author{Arnob~Ghosh and
        Saswati~Sarkar
       % <-this % stops a space
\thanks{The authors are with the Department
of Electrical and Systems Engineering, University Of Pennsylvania, Philadelphia,
PA, USA.Their  E-mail ids are arnob@seas.upenn.edu and swati@seas.upenn.edu}}
%%
%% For over three affiliations, or if they all won't fit within the width
%% of the page, use this alternative format:
%%
% \author{
%   \IEEEauthorblockN{
%     Michael Shell\IEEEauthorrefmark{1},
%     Homer Simpson\IEEEauthorrefmark{2},
%     James Kirk\IEEEauthorrefmark{3},
%     Montgomery Scott\IEEEauthorrefmark{3} and
%     Eldon Tyrell\IEEEauthorrefmark{4}}
%   \IEEEauthorblockA{
%     \IEEEauthorrefmark{1}School of Electrical and Computer Engineering\\
%     Georgia Institute of Technology, Atlanta, Georgia 30332--0250\\
%     Email: see http://www.michaelshell.org/contact.html}
%   \IEEEauthorblockA{
%     \IEEEauthorrefmark{2}Twentieth Century Fox, Springfield, USA\\
%     Email: homer@thesimpsons.com}
%   \IEEEauthorblockA{
%     \IEEEauthorrefmark{3}Starfleet Academy, San Francisco, California 96678-2391\\
%     Telephone: (800) 555--1212, Fax: (888) 555--1212}
%   \IEEEauthorblockA{
%     \IEEEauthorrefmark{4}Tyrell Inc., 123 Replicant Street, Los Angeles, California 90210--4321}
% }

%% Use for special paper notices
%\IEEEspecialpapernotice{(Invited Paper)}

%% To balance the two columns, you should reduce the text-height of
%% the last page using the following command:
%%%%%%%%%%%%%%%%%%%%%%%%%%%%%%%%%%%%%%%%%%%%%%%%%%%%%%%%%%%%%%%%%%%%%
%\addtolength{\textheight}{-9.35cm}
%%%%%%%%%%%%%%%%%%%%%%%%%%%%%%%%%%%%%%%%%%%%%%%%%%%%%%%%%%%%%%%%%%%%%
%% with an appropriate value. This command must be place on the second
%% last page, i.e., for a one-page abstract here, for a two-page
%% abstract right after the \maketitle command.

%% Create the title:
\maketitle

%% Abstract:
%% For the final version of the accepted paper, please make sure you
%% remove the comment "THIS PAPER IS ELIGIBLE FOR THE STUDENT PAPER
%% AWARD."On contrary, we have incorporated the feature that a primary does not the state of channel of its competitors before deciding its price; this makes the problem significantly challenging; despite this fact, we have been able to explicitly compute NE.
%%Abstract starts%%%%

\begin{abstract}
We investigate a spectrum oligopoly  where primary users allow secondary access  in lieu of financial remuneration. Transmission qualities of the licensed bands
fluctuate randomly. Each primary needs to select the price of its channel with the knowledge of its own channel state but not that of its competitors. Secondaries choose among the channels available on sale  based on  their states   and prices. We formulate the price selection as a non-cooperative game and prove that a symmetric Nash equilibrium (NE) strategy profile exists uniquely. We explicitly compute this strategy profile and analytically and numerically evaluate its efficiency.  Our structural results provide certain key insights about the unique symmetric NE.
\end{abstract}

%%%% Introduction Starts %%%%
\section{Introduction}
Recent investigations augur that demand for mobile broadband -- driven by the large scale proliferation of wireless industry  -- will surpass the availability of wireless spectrum in imminent future. Yet, as recent measurements suggest, the licensed bands remain largely under-utilized.
A reasonable conjecture therefore is that unlicensed access of  idle (but licensed) spectrum bands,
 commonly referred to as secondary spectrum access, would avert the impending crisis. Recently, FCC has  legalized the access of TV white space spectrum, and
 the advent of cognitive radios together with the design of a plethora of sophisticated algorithms  have enabled intelligent   selection of  bands.
Large-scale secondary spectrum access can  not however be realized
   only through the availability of the enabling technology and the regulatory progress:  secondary access must also be rendered  profitable for the license holders.  Accordingly,
we investigate a spectrum oligopoly \cite{mwg}  where license holders (hitherto referred to as primaries)  allow unlicensed users (hitherto referred to as secondaries), in lieu of financial remuneration,  access to the channels  (licensed bands) that are not in use. Different channels offer different transmission rates to the secondaries
depending on their states which evolve randomly and reflect the usage levels of the primaries as also transmission quality fluctuations owing to fading. %Each  channel can be in  state, (state $0$ denotes that channel is not available) with certain probability, which is same over the primaries.
Each primary quotes a price for the channel that it offers and secondaries select among
the available channels depending on the states and the prices quoted. Thus,
if a primary quotes a high price, it will earn a large profit if it sells its channel, but may not be able to sell at all; on the other hand a low price will enhance the probability of a sale but also fetch lower profits in the event of a sale.

Price selection in oligopolies may be modeled as a non-cooperative game and has naturally been extensively investigated in economics, implementing \textit{Bertrand Game} \cite{mwg} and its modifications \cite{Osborne, Kreps, Chawla}.
However, all the above papers ignore the uncertainty of competition which distinguishes spectrum markets from standard oligopolies: a primary knows the state of its channel but does not know those of  its competitors before deciding the price
for its channel. Pricing in communication services have been explored to a great extent (\cite{courcoubetis} prsenets a brief overview). References \cite{Ileri, Mailespectrumsharing, Mailepricecompslotted, Xing, Niyatospeccrn, Niyatomultipleseller} have analyzed price competition among spectrum providers. References \cite{Niyatospeccrn, Niyatomultipleseller} modeled price competition among multiple players. But all the above papers suffer from drawbacks: first, they did not assay uncertainty of states of channels of competitors ; second, most of them did not explicitly determine a Nash Equilibrium (NE) (exceptions are \cite{Mailespectrumsharing},\cite{Niyatospeccrn}).   On the other hand, the  papers that consider uncertainty of competition, namely \cite{Gaurav1,Gaurav2, gauravjsac, Janssen,Kimmel},  assume that the commodity on sale can be in one of two states: available or otherwise. This assumption does not capture different transmission qualities offered by
 the available channels. The consideration of the latter significantly complicates the  analysis of the game. A primary may now need to employ different pricing strategies for different states, while in the former case a single pricing strategy will suffice as a price need not be quoted for an unavailable commodity.
Our investigation seeks to contribute in this space.

%The aspect of this uncertainty has been captured in very limiting sense in , \cite{Kimmel} ,. But all these references considered that primary channel can be only one of the two states i.e. $0$ and $1$ (i.e. either it is available or not). But, in a CRN, state of a channel evolve randomly over time, so even if a channel is available to a primary, its state may differ significantly from state of the other primary's channels. A primary can only have an imperfect information about the states of channels of other primaries. To the best of our knowledge this is the first paper which has captured this important aspect. Results of \cite{Janssen}, \cite{Kimmel} only limited for one buyer.

%Our result applies to any arbitrary number of secondaries, and even readily generalizes to random demand at a location. \\

We have modeled the price selection as a game with primaries as the players (Section~\ref{sec:model}) and seek an NE  pricing strategy.  We consider that the preference of the secondaries can be captured by a penalty function which associates a penalty value to each channel that is available for sale depending on its state and price quoted. Given the state of a channel, there is a one-to-one correspondence between the price quoted and the penalty perceived by a secondary. Thus, the strategy for selection of a price for a channel in a given state may be equivalently represented as a strategy for selection of penalty that the channel offers to a secondary. Since prices and therefore the penalties take real values, the strategy set of the players are continuous; also the payoff functions for the primaries turn out to be discontinuous. Thus, classical results
do not guarantee the existence, let alone the uniqueness,  of an NE. In addition, existing literature does not provide algorithms for computing an NE unlike when the strategy set is finite\cite{myerson}. Starting from a general set of strategy profiles for the primaries which allows for selecting penalties using arbitrary probability distributions, we show that for a large class of penalty functions, there exists a unique symmetric NE strategy profile, which we explicitly compute (Section~\ref{sec:results}). Our analysis reveals several interesting insights about the structure of the symmetric NE. First, we learn that  if a channel in state $i$ provides a higher transmission rate to a secondary than that in state $j$, then the symmetric NE strategy profile selects the penalties for $i, j$ respectively from ranges $[L_i, U_i], [L_j, U_j]$ where $U_i \leq L_j.$ Thus, a secondary will always prefer a channel in state $i$ to a channel in state $j$ considering both the prices and the states.  This negates the intuition that prices ought to be selected for the states so as to render them equally preferable to a secondary - symmetric NE strategy profiles in fact price the channels so as to retain the preference order provided by the states. The analysis also reveals that
the unique symmetric NE strategy profile consists ``nice'' probability distributions in that they are  continuous and strictly increasing;  the former  rules out pure strategy symmetric NEs and the latter ensures that the support sets are contiguous. Finally, utilizing the explicit computation algorithm for the symmetric NE strategies, we analytically and numerically investigate the reduction in expected profit suffered under the unique symmetric NE pricing strategies as compared to  the maximum possible value allowing for collusion among primaries (Section~\ref{numerical}).  

\textit{All the proofs are deferred to the Appendix}.

%Penalty function of secondary has been modeled with a 1-1 correspondence with price for each state of the channel. So, we have considered primaries selecting penalty instead of price. But, now problem becomes quite challenging as each primary now has  a penalty distribution function (d.f.) \footnote{Recall the definition of distribution function (d.f) of a random variable $X$ is the function
%$G(x)=P(X\leq x) \quad x\in \Re$} for each state of the channel, i.e. each primary's strategy is now vector (In \cite{Gaurav1},\cite{Gaurav2} it is scalar), which makes the problem significantly difficult even for symmetric case, compared to \cite{Gaurav1},\cite{Gaurav2}. Additionally, Our main contribution is that we have been able to characterize and compute explicitly the NE (Section III) and we have also  shown the uniqueness of NE. Investigation reveals several key insights. First, range selected by primary for state of the channels are ordered, with primary chooses higher penalty for worse QoS of the channel. On the contrary, intuition is to choose prices such that it has an equaling effect (i.e. choose prices such that penalty to the secondary is equal for each QoS). Second, penalty (equivalently price) d.f.s are continuous and strictly increasing in the support set \footnote{The support set of a d.f. is the smallest closed set such that its complement has probabilit! !y zero under the d.f.}
%%% Introduction Ends

%%% System Model and Objective
\section{System Model}
\label{sec:model}
%%%Model%%%%%

  We consider a spectrum market with $n$ primaries and $m$ secondaries. We will initially consider the case that the primaries know $m$, later generalize our results for random, apriori unknown $m$. Each primary has access to a channel which can be in states $0, 1, \ldots, n$, where state $i$ provides a lower transmission rate to a secondary than state $j$ if $i < j$ and state $0$ arises when
  the channel is not available for sale and provides $0$ transmission rate.
  Different channels constitute disjoint frequency bands leased by the primaries.
  A channel is in state $i \geq 1 $ w.p. $q_i$ and in state $0$ w.p. $1-q$ where $q = \sum_{i=1}^n q_i$, independent of the states of other channels.
    If a primary quotes a price $p$ for a channel in state $i$, then the channel offers a penalty $g_i(p)$ to a secondary. Each $g_i(\cdot)$ is continuous, strictly increasing in its argument, and therefore invertible. We denote $f_i(\cdot)$ as the inverse of $g_i(\cdot)$; clearly $f_i(\cdot)$ is continuous and strictly increasing in its argument as well.   No secondary buys a channel whose penalty is higher than $v$, and as the name suggests a secondary prefers a channel with a lower penalty (a secondary's preference depends entirely on the penalty). Thus, we must have $g_i(p)>g_j(p)$ and $f_i(x)<f_j(x)$ for each $x, p$ and $i<j$.   Each primary also incurs a transition cost $c>0$ for an available channel, and therefore never selects a price lower than $c.$  We  assume that
\begin{align}\label{con1}
\dfrac{f_j(y)-c}{f_k(y)-c}<\dfrac{f_j(x)-c}{f_k(x)-c}  \ \mbox{ for all } x>y > g_j(c), j<k
\end{align}
A large  class of penalty functions $g_i(\cdot)$  satisfy the above property required of the corresponding inverses, e.g., $g_i(p) = \zeta\left(p - h(i)\right), g_i(p) = \zeta\left(p/h(i)\right)$ where
 $\zeta(\cdot)$ is continuous, strictly increasing function and $h(\cdot)$ is strictly increasing, $g_i(p) = p^r - h(i), g_i(p) = p^rh(i), g_i(p)=\exp(p) - h(i), g_i(p)=\log(p)-h(i)$ for $r > 0$ and a strictly increasing $h(\cdot).$ In addition,
 $g_i(\cdot)$ such that the inverses are of the form $f_i(x)=h_1(x)+h(i), f_i(x) = h_1(x)h(i)$,  where $h_1(\cdot)$ is continuous and strictly increasing; $h(\cdot)$ is sirictly increasing, satisfy the above assumption.

If primary $i$ quotes   a price $p$ for its channel then its  profit(payoff)  is \\
\begin{equation*}
\begin{cases} p-c  & \text{if the primary sells its channel}\\
0 & \text{otherwise}
\end{cases}
\end{equation*}
Note that if $Y$ is the number of channels offered for sale for which the penalties are upper bounded by $v$, then
those with $\min(Y, m)$ lowest penalties are sold  since secondaries select channels in increasing order of penalties. The ties among channels with identical penalties are broken randomly and symmetrically
   among the primaries. Also, note that utilities of primaries are not continuous functions of their actions.

% Note that, a primary's channel will be bought by the secondary only if it is among the $m$th smallest among the penalties offered by the primaries.

 %We formulate the selection of prices by the primaries as a game where  the players are primaries  In game theoretic model, we can consider that at each time slot nature chooses a state $i, i\in\{1,...,n\}$,  for each primary with probability $q_i$ and state $0$ with probability $1-\sum_{i=1}^{n}q_i$. So, all the primaries are statistically identical, Statistics is apriori known to the players.\\
 Each primary selects the penalty for its channel with the knowledge of the state of the channel, but without knowing the states of the other channels; a primary however knows $l, m, n, q_1, \ldots, q_n$. Note that the choice of the penalty uniquely determines the price since there is a one-to-one correspondence between the two given the state of a channel.
Primary $i$ chooses its penalty using an arbitrary probability distribution function (d.f.) \footnote{Here probability distribution refers cumulative distribution function. Recall the definition of cumulative distribution function (d.f) of a random variable $X$ is the function
$G(x)=P(X\leq x) \quad x\in \Re$ \cite{df}}$\psi_{i,j}(.)$
when its channel is in state $j \geq 1$. If $j = 0$ (i.e., the channel is unavailable),   $i$ chooses a penalty of $v+1$: this is equivalent
 to considering that such a channel is not offered for sale as no secondary
   buys a channel whose penalty exceeds $v$. For $j \geq 1$,  each primary selects its price   so as to maximize  its expected profit. Thus, if $m\geq l$,  primaries  select the highest penalty for each state $1,\ldots, n$, since   all available channels will be sold. So, we consider $m<l$. $S_i=(\psi_{i,1},....,\psi_{i,n})$ denotes the strategy of primary $i$, and $(S_1,...,S_l)$ denotes the strategy profile of all primaries (players).
\begin{defn}
$S_{-i}$ denotes the strategy profile of primaries other than $i.$
$E\{u_{i, j}(\psi_{i,j},S_{-i})\} $ denotes the expected profit when primary $i$'s channel is in state $j$ and it  uses strategy $ \psi_{i, j}(\cdot) $  and other primaries use strategy $S_{-i}$.
\end{defn}
\begin{defn}
A \emph{Nash  equilibrium}  $(S_1, \ldots, S_n)$ is a strategy profile such that no primary can improve its expected profit by unilaterally deviating from its strategy  \cite{mwg}.  So, with $S_i=(\psi_{i,1},....,\psi_{i,n})$, $(S_1, \ldots, S_n)$, is  a  Nash equilibrium (NE) if for each primary $i$ and channel state $j$
\begin{align}
E\{u_{i,j}(\psi_{i,j},S_{-i})\}\geq E\{u_{i,j}(\tilde{\psi}_{i,j},S_{-i})\} \ \forall \ \tilde{\psi}_{i, j}.
\end{align}
An NE $(S_1, \ldots, S_n)$  is a \emph{symmetric NE} if $S_i = S_j$ for all $i, j.$
\end{defn}
%%%Symmetric Game Starts %%%%

The above game is a symmetric one since  primaries have the same action sets, payoff functions and their channels are statistically identical. We therefore consider only symmetric NEs\footnote{For a symmetric game, an asymmetric NE is rarely realized. For example. for two players, if  $(S_1, S_2)$ is an NE,
 $(S_2, S_1)$ is also an NE. The realization of such an NE is possible only when each player knows whether
 the other uses $S_1$ or $S_2$.  This complication  is somewhat alleviated for a symmetric NE as all players play the same strategy; this complication is eliminated only when there is a unique symmetric NE. Note that, there are plethora of examples of symmetric games \cite{mwg}, which have multiple NEs. We prove that that there is a unique symmetric NE for the game we consider. }.  Clearly, for any symmetric NE, we can represent the strategy of any primary as $S = (\psi_{1}(.), \psi_2(.),.....,\psi_n(.)) $ where we drop the index corresponding to the primary.

  Let $\phi_j(x)$ denote the expected profit of a primary whose channel is in state $j$ and who selects
  a penalty $x$ and $r(x)$ denote the probability that a channel quoted at penalty $x$ is sold.  Note that the dependence of $\phi_j(x), r(x)$ on the strategy profile of the primaries is
  not explicitly indicated to ensure notational simplicity. Also, note that $r(x)$ does not depend on the state of the channel since secondaries select the channels based only on the penalties. Next,
\begin{align}\label{ex1}
\phi_j(x)=(f_j(x)-c)r(x).
\end{align}
(recall that the inverse of the penalty function $g_j(\cdot)$,  $f_j(\cdot)$, provides the price
 that corresponds to penalty $x$ and channel state $j$).
\begin{defn}\label{br}\label{du}
 A \emph{best response} penalty for  a channel in state  $j\geq 1$  is $x$ if and only if
\begin{equation*}
\phi_j(x)=\underset{y\in \Re}{\text{sup}}\phi_j(y).
\end{equation*}
Let $u_{j,max} = \phi_j(x)$ for a best response $x$ for state $j$, $j \geq 1$ i.e.,  $u_{j,max}$ is the maximum expected profit  that a primary earns under NE strategy profile, when its channel is in state  $j$, $j \geq 1$ .
\end{defn}
% %Necessary Conditions % % %
\section{A symmetric NE: Existence, Uniqueness and Computation}
\label{sec:results}
 First, we identify key structural properties of a symmetric  NE (should it exist).
  Next we show that the above properties leads to a unique strategy profile which we explicitly compute  - thus the symmetric NE is unique should it exist.  We finally prove that the strategy profile resulting from the structural properties above is   indeed a symmetric NE thereby establishing existence.

\subsection{Structure of a  symmetric NE}\label{sec:structure}

We start with by providing some important properties that any symmetric NE $\left(\psi_1(\cdot), \ldots, \psi_n(\cdot)\right)$ must satisfy.
\begin{thm}
\label{thm1}
$\psi_{i}(.), i\in \{1,..,n\}$ is a continuous probability distribution.
\end{thm}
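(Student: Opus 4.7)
The plan is the classical no-atoms argument for Bertrand-type games, adapted here to account for multiple channel states. Assume for contradiction that for some $j\in\{1,\ldots,n\}$ the d.f.\ $\psi_j(\cdot)$ has an atom of mass $a>0$ at a point $x_0\in\Re$; I will show that a primary whose channel is in state $j$ can strictly improve upon quoting $x_0$ by slightly undercutting it, violating Definition~\ref{du}. Since any atom lies in the support of a best response, $\phi_j(x_0)=u_{j,\max}$. Two preliminary observations are useful: (i) $u_{j,\max}>0$, because quoting penalty $v$ sells the channel at least whenever the other $n-1$ channels are unavailable (an event of probability $(1-q)^{n-1}>0$) and the per-sale profit $f_j(v)-c$ is strictly positive (otherwise state $j$ is degenerate, having no penalty with positive expected profit); (ii) consequently $x_0\in[g_j(c),v]$, $f_j(x_0)-c>0$, and $r(x_0)>0$.

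The core step is to lower-bound the jump of $r(\cdot)$ at $x_0$. Let $N$ denote the number of the other $n-1$ primaries whose penalty is strictly less than $x_0$ and $K$ the number whose penalty equals $x_0$. The sale rule stated in the model together with random symmetric tie-breaking gives
\begin{equation*}
r(x_0)=E\!\left[\mathbf{1}\{N<m\}\,\frac{\min(K+1,\,m-N)}{K+1}\right].
\end{equation*}
By symmetry each competing primary independently quotes exactly $x_0$ with probability at least $q_j a$, so $\Pr(K=n-1)\ge (q_j a)^{n-1}>0$; on this event $N=0$ and, since $m<n$, the integrand equals $m/n<1$. On the other hand, for $y=x_0-\epsilon$ chosen as a continuity point of every $\psi_k(\cdot)$ (a dense set of $\epsilon>0$), no competitor ties with the primary almost surely, and the probability that a competitor lies in $[y,x_0)$ tends to $0$ as $y\uparrow x_0$; dominated convergence therefore yields $r(y)\to E[\mathbf{1}\{N<m\}]$. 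Combining these two facts,
\begin{equation*}
\limsup_{y\uparrow x_0}\,r(y)-r(x_0)\;\ge\;\left(1-\frac{m}{n}\right)(q_j a)^{n-1}\;>\;0.
\end{equation*}

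Finally, continuity of $f_j$ together with $f_j(x_0)-c>0$ combines with this strict jump: picking $\epsilon>0$ small enough and $y=x_0-\epsilon$ at a joint continuity point,
\begin{equation*}
\phi_j(y)=(f_j(y)-c)\,r(y)\;>\;(f_j(x_0)-c)\,r(x_0)=u_{j,\max},
\end{equation*}
contradicting the definition of $u_{j,\max}$ and therefore the existence of the atom. The main obstacle is executing the limiting step for $r(\cdot)$ rigorously: competitors mix over all $n$ states and several of the $\psi_k(\cdot)$'s may simultaneously carry atoms at $x_0$, so one cannot naively let $\epsilon\downarrow 0$ but must condition on the competitors' joint state-penalty distribution and pass through continuity points of the induced aggregate distribution. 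Once this is done, the undercutting inequality above is immediate from the positivity and continuity of $f_j(x_0)-c$.
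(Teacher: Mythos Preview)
Your argument is correct and follows the same undercutting-at-an-atom approach as the paper. The paper's execution is a bit tidier: it works with the $m$th order statistic $X_m$ of the competing penalties and notes that for \emph{every} $\epsilon>0$ one has $r(x_0-\epsilon)\geq P(X_m\geq x_0)$, while $r(x_0)=P(X_m>x_0)+r(x_0\mid X_m=x_0)\,P(X_m=x_0)$ with $r(x_0\mid X_m=x_0)<1$, giving a uniform lower bound $r(x_0-\epsilon)-r(x_0)\geq\gamma>0$ for all $\epsilon>0$ and thereby sidestepping the limiting step through continuity points that you flag as the main obstacle. (One notational slip: the number of primaries in the paper is $l$, not $n$---the ``$n$ primaries'' at the start of Section~\ref{sec:model} is a typo---so your $n-1$, $m<n$, and $m/n$ should read $l-1$, $m<l$, and $m/l$.)
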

The above theorem rules out any pure strategy symmetric NE.

\begin{defn}\label{dlu}
 We denote the lower and upper endpoints of the support set\footnote{The support set of a probability distribution is the smallest closed set such that the probability of its complement is $0.$}   of $\psi_i(.)$ as $L_i$ and $U_i$ respectively i.e.
\begin{equation*}\label{n77}
L_i=\inf\{x: \psi_{i}(x)>0\}
\end{equation*}
\begin{equation*}\label{n77a}
U_i=\inf\{x: \psi_i(x)=1\}
\end{equation*}
\end{defn}
We next show that the support sets are ordered in increasing order of the state indices.
\begin{thm}
\label{thm2}
$U_i\leq L_j$, if $j<i$
\end{thm}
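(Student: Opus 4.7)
The plan is a proof by contradiction. Suppose $U_i > L_j$ for some pair $j < i$; I will extract penalties $y < x$ with $x \in \mathrm{supp}(\psi_i)$ and $y \in \mathrm{supp}(\psi_j)$, and play the NE optimality conditions for the two states against each other to contradict the monotone-ratio hypothesis~(\ref{con1}).

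First I would collect the analytic facts that underwrite the contradiction. By Theorem~\ref{thm1} each $\psi_k$ is a continuous distribution, so the sale probability $r(\cdot)$ --- a polynomial in the $\psi_k$'s evaluated at the query penalty --- is continuous, and therefore each $\phi_k(\cdot) = (f_k(\cdot) - c)\, r(\cdot)$ is continuous on $\Re$. Continuity of $\phi_k$ together with closedness of $\mathrm{supp}(\psi_k)$ gives $\phi_k(x) = u_{k,\max}$ for every $x \in \mathrm{supp}(\psi_k)$. Next, $u_{k,\max} > 0$ for each state $k \ge 1$, because a primary holding state $k$ can secure a strictly positive expected profit by, e.g., quoting the top acceptable penalty $v$ at a price strictly above $c$. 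This positivity forces both $f_k(x) > c$ and $r(x) > 0$ throughout $\mathrm{supp}(\psi_k)$, since a zero in either factor would drop $\phi_k(x)$ to $0$, strictly below $u_{k,\max}$. In particular, $L_k > g_k(c)$ for every $k$.

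To choose the witness pair, I note that since $U_i > L_j$, the support of $\psi_i$ (which accumulates at $U_i$) must contain some $x$ with $x > L_j$. Take $y := L_j$, which belongs to $\mathrm{supp}(\psi_j)$ and satisfies $g_j(c) < y < x$. The NE optimality inequalities then yield
\begin{align*}
(f_i(x) - c)\, r(x) &\ge (f_i(y) - c)\, r(y),\\
(f_j(y) - c)\, r(y) &\ge (f_j(x) - c)\, r(x).
\end{align*}
Because every factor is strictly positive, multiplying these two inequalities and dividing through by $r(x)r(y)(f_i(x)-c)(f_i(y)-c)$ produces
\[
\frac{f_j(x) - c}{f_i(x) - c} \le \frac{f_j(y) - c}{f_i(y) - c},
\]
which directly contradicts (\ref{con1}) applied with $k := i$ at $x > y > g_j(c)$.

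The main obstacle is the positivity step --- showing rigorously that $r$ and $f_k - c$ are both strictly positive on every support, so that the two NE inequalities can be legitimately cross-multiplied and rearranged. Everything else is either an immediate consequence of Theorem~\ref{thm1} or the one-line algebraic reduction to hypothesis~(\ref{con1}).
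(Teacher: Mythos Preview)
Your argument is correct and is essentially the paper's own proof: both pick a best-response penalty for state $j$ and a strictly larger best-response penalty for state $i>j$, exploit the NE optimality conditions at these two points, and reduce to a violation of assumption~(\ref{con1}). The only cosmetic difference is that you multiply the two NE inequalities and cancel $r(x)r(y)$, whereas the paper substitutes one profit identity into the next to conclude $\phi_i(\text{smaller penalty})>u_{i,\max}$; the positivity facts you flag (namely $r>0$ and $f_k>c$ on the supports) are exactly the paper's Observation~\ref{o1} together with the continuity established after Theorem~\ref{thm1}.
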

We finally rule out any ``gaps'' inside the support sets and between the support sets for different
$\psi_i(\cdot)$,  $i=1,..,n$. This also establishes that $\psi_i(\cdot)$ is strictly increasing in $[L_i, U_i].$
\begin{thm}
\label{thm3}
The support set of $\psi_i(.), i=1,..,n$ is $[L_i, U_i]$  and $U_i=L_{i-1}$ for $i=2,..,n$, $U_1=v$.
\end{thm}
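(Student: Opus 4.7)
The plan is to invoke the indifference property of mixed strategy Nash equilibria: for every penalty $x$ in the support of $\psi_i$ we must have $\phi_i(x) = u_{i,max}$, and no penalty anywhere can yield strictly more than $u_{i,max}$. Each of the three conclusions in the theorem will be proved by contradiction, in each case exhibiting a profitable upward deviation on a supposed ``gap'' (either inside $[L_i,U_i]$ or between the supports of two consecutive states) by showing that the sale probability $r(\cdot)$ is locally constant while $f_i(\cdot) - c$ is strictly increasing. A preliminary observation I would use throughout is that $\phi_i(\cdot)$ is continuous: by Theorem~\ref{thm1} every $\psi_j$ is a continuous distribution, so the probability $\sum_{j=1}^n q_j \psi_j(x)$ that a given competing primary is on sale with penalty at most $x$ is continuous in $x$, which makes $r(x)$ continuous and therefore so is $\phi_i(x) = (f_i(x) - c) r(x)$.

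For the claim that the support of $\psi_i$ is the entire interval $[L_i, U_i]$, I would suppose by contradiction that there is a maximal open subinterval $(a,b) \subset [L_i, U_i]$ on which $\psi_i$ is constant, with $a$ and $b$ both in the support. For $x \in (a,b)$, Theorem~\ref{thm2} implies that every state $j > i$ has exhausted its support below $L_i \le a$, so $\psi_j(x) = 1$, while every state $j < i$ has not yet entered its support (which lies at or above $U_i \ge b$), so $\psi_j(x) = 0$; the state-$i$ contribution is constant on $(a,b)$ by hypothesis. Hence $r$ is constant on $(a,b)$ and the strict monotonicity of $f_i$ makes $\phi_i$ strictly increasing on $(a,b)$; continuity of $\phi_i$ then gives $\phi_i(b) > \phi_i(a)$. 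But $a$ and $b$ both lie in the support, so the indifference property forces $\phi_i(a) = \phi_i(b) = u_{i,max}$, a contradiction.

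The boundary relations $U_1 = v$ and $U_i = L_{i-1}$ for $i \ge 2$ succumb to the same push-upward strategy. If $U_1 < v$, then on $(U_1, v]$ every state $j \ge 1$ has already completed its support (by Theorem~\ref{thm2} applied with $i = 1$), so $r$ is constant there; a state-$1$ primary currently playing $U_1$ could instead play any $x' \in (U_1, v]$, which strictly increases $f_1(x') - c$ while leaving $r$ unchanged and therefore strictly increases $\phi_1$ above $u_{1,max}$, contradicting NE. Similarly, if $U_i < L_{i-1}$ for some $i \ge 2$, then on $(U_i, L_{i-1})$ all states $j \ge i$ have exhausted their supports while all states $j < i$ have yet to begin theirs, so $r$ is again constant, and a state-$i$ primary can deviate upward from $U_i$ into this region to strictly improve its profit, another contradiction. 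I expect the main subtlety to lie in rigorously justifying this constancy of $r$ on the conjectured gaps; it rests entirely on the strict separation of supports from Theorem~\ref{thm2} together with the absence of point masses guaranteed by Theorem~\ref{thm1}, without which $r$ could jump at endpoints and the chain of deductions would break.
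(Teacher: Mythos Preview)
Your proposal is correct and follows essentially the same argument as the paper: any gap in $[L_n, v]$ leaves the sale probability $r(\cdot)$ (equivalently $F(\cdot)$) constant while $f_i(\cdot)-c$ is strictly increasing, so a primary can deviate upward inside the gap and strictly improve its payoff, contradicting best response. The paper simply unifies your three cases (internal gap, $U_1<v$, and $U_i<L_{i-1}$) into a single ``there exists a gap $(x,y)\subseteq[L_n,v]$'' argument, taking $\tilde a=\inf\{b\le x:\psi_j(b)=\psi_j(x)\ \forall j\}$ as the left endpoint, but the content is identical.
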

\textit{Remark}: The structure of the symmetric NE identified in Theorems 1 to 3 provide several interesting insights:
\begin{itemize}
\item Theorem~\ref{thm2} implies that the primaries select the highest penalties for the worst states.
 The primaries therefore do not strive to render all states equally preferable to the secondaries through price selection.
 \item Theorems~\ref{thm1} and \ref{thm3} reveal that the symmetric NE strategy profile consists of
  ``well-behaved'' distribution functions.
 \end{itemize}

\subsection{Computation and Uniqueness of a Symmetric NE}
\label{sec:computation}
%Theorem~\ref{thm2}, \ref{thm3} imply that for any symmetric strategy profile
%the support set of $\psi_i(\cdot)$ is  $[L_{i},L_{i-1}], i=\{2,..,n\}$ and $[L_1,v_1]$ for $i = 1.$
We now show that the structural properties of a symmetric NE identified in
Theorems~\ref{thm1}, \ref{thm2}, \ref{thm3} are satisfied by a unique strategy profile,
 which we explicitly compute. This proves the uniqueness of a symmetric NE subject to existence.
 We start with the following definitions.
 \begin{eqnarray}
w(x)&=&\sum_{i=m}^{l-1}\dbinom{l-1}{i}x^i(1-x)^{l-i-1}\label{d4}\\
w_i&=&w(\sum_{j=i}^{n}q_j) \quad \text{for} i=1,...,n \quad \text{and} w_{n+1}=0\label{n50}
\end{eqnarray}
Clearly, for $x \in [0, 1]$,  $w(x)$ is the probability of at least m successes out of l-1 independent Bernoulli trials, each of which occurs with probability $x.$ Note that $w(\cdot)$ is continuous and strictly increasing in $[0,1]$ \cite{walsh}, so its inverse exists.
Note that $w_i>w_j$ if $i<j, i,j\in\{1,...,n\}$ as $w_i$ is the success probability of at least m successes out of (l-1) independent Bernoulli Events, where each of which occurs with probability $\sum_{j=i}^{n}q_j$.

\begin{lem} \label{lu} For $1 \leq i \leq n$,
\begin{eqnarray}u_{i,max} & = & p_i-c \nonumber\\
\mbox{where, } p_i& =& c+(f_i(L_{i-1})-c)(1-w_i)
\label{n51}\\
\mbox{and } L_i&=&g_i(\dfrac{p_i-c}{1-w_{i+1}}+c), L_{0}=v\label{n52}
\end{eqnarray}
\end{lem}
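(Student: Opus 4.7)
The plan is to exploit the indifference property that characterizes any mixed-strategy NE: on the support of $\psi_i$, the expected payoff $\phi_i(x)=(f_i(x)-c)r(x)$ must be constant and equal to $u_{i,\max}$. The substantive step is to obtain a clean closed form for $r(x)$ on $[L_i,U_i]$ using the structural facts already established in Theorems~\ref{thm1}--\ref{thm3}; the lemma then drops out by evaluating this identity at the two endpoints.

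Step one: compute $r(x)$ for $x\in[L_i,U_i]$. I would decompose over the state of a fixed competing primary. State $0$ quotes $v+1>x$ and contributes $0$. By Theorem~\ref{thm2}, any state $j>i$ satisfies $U_j\le L_i\le x$, so it surely quotes a penalty $\le x$ and contributes $q_j$. Any state $j<i$ satisfies $L_j\ge U_i\ge x$, and since Theorem~\ref{thm1} makes $\psi_j$ atom-free it contributes $0$. State $i$ contributes $q_i\psi_i(x)$. Summing,
\begin{equation*}
\Pr(\text{competitor quotes}\le x)=\sum_{j=i+1}^{n} q_j+q_i\psi_i(x).
\end{equation*}
Because no $\psi_j$ carries an atom, tie events have zero probability, so the primary's channel sells iff at most $m-1$ of the remaining $l-1$ primaries quote a penalty $\le x$. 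This yields
\begin{equation*}
r(x)=1-w\!\Bigl(\sum_{j=i+1}^{n} q_j+q_i\psi_i(x)\Bigr),
\end{equation*}
with $w(\cdot)$ as in \eqref{d4}.

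Step two: evaluate at the endpoints. By continuity of $\psi_i$ (Theorem~\ref{thm1}) we have $\psi_i(L_i)=0$ and $\psi_i(U_i)=1$, so the definition \eqref{n50} of $w_i$ gives $r(L_i)=1-w_{i+1}$ and $r(U_i)=1-w_i$. The indifference relation $\phi_i(L_i)=\phi_i(U_i)=u_{i,\max}$ therefore reads
\begin{equation*}
(f_i(L_i)-c)(1-w_{i+1})\;=\;u_{i,\max}\;=\;(f_i(U_i)-c)(1-w_i).
\end{equation*}
Substituting $U_i=L_{i-1}$ from Theorem~\ref{thm3} (with the boundary convention $L_0:=U_1=v$) into the right-hand equality produces \eqref{n51}, and inverting the left-hand equality via $g_i=f_i^{-1}$ produces \eqref{n52}. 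The case $i=n$ is handled automatically by the convention $w_{n+1}=0$.

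The main subtlety I anticipate is justifying the indifference identity on the full closed interval $[L_i,U_i]$, including the endpoints. Theorems~\ref{thm1} and \ref{thm3} combine precisely for this purpose: continuity of $\psi_i$ makes $r(\cdot)$, hence $\phi_i(\cdot)$, continuous at $L_i$ and $U_i$, while the fact that the support equals exactly $[L_i,U_i]$ with no internal gaps (Theorem~\ref{thm3}) rules out a profitable deviation to some penalty inside the interval that would break the constancy of $\phi_i$. Once this is in hand the remaining manipulation is a one-line algebraic inversion.
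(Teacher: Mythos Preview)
Your proof is correct and follows essentially the same line as the paper's: evaluate the best-response payoff $\phi_i$ at the two endpoints of the support, using $r(L_i)=1-w_{i+1}$ and $r(U_i)=1-w_i$, then invert. The paper packages this as an induction on $i$ (exploiting that $L_t=U_{t+1}$ is simultaneously a best response for states $t$ and $t+1$) rather than treating each $i$ directly, but the underlying computation is identical.
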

Using \eqref{n51} and \eqref{n52}, $u_{i,max}, L_i$ can be computed recursively starting from $i=1.$ Note that as $1-w_i>0,\forall i\in\{1,\ldots,n\}$, thus, $p_i-c>0$. Hence, from the definition of $L_k$ (\ref{n52}), it is evident that 
\begin{align}\label{imp}
f_k(L_k)>c
\end{align}
Expressions of $L_i$ and $p_i$ are used in the following lemma to determine the unique $\psi_i(\cdot)$, if it exists
\begin{lem} \label{lm:computation}
A symmetric NE strategy profile  $\left(\psi_1(\cdot), \ldots, \psi_n(\cdot)\right)$
comprises of: \begin{align}\label{c5}
\psi_i(x)= & %\begin{cases}
 0 ,  \text{if} \ x<L_i\nonumber\\
& \dfrac{1}{q_i}(w^{-1}(\dfrac{f_i(x)-p_i}{f_i(x)-c})-\sum_{j=i+1}^{n}q_j), \text{if} \ L_{i-1}\geq x\geq L_i\nonumber\\
& 1,  \text{if} \ x>L_{i-1}
%\end{cases}
\end{align}
where $L_i, i=1,..,n$ are as defined in (\ref{n52}) and $L_0=v$.
\end{lem}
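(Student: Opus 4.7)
The plan is to exploit Theorems \ref{thm1}--\ref{thm3} to reduce the determination of $\psi_i(\cdot)$ on its support to a single indifference equation. By Theorem \ref{thm3}, the support of $\psi_i$ is the contiguous interval $[L_i,L_{i-1}]$ (with $L_0=v$) and $\psi_i$ is strictly increasing there, so every $x\in[L_i,L_{i-1}]$ must be a best response when the channel is in state $i$; by Definition \ref{br} and Lemma \ref{lu} this yields the indifference condition
\begin{equation*}
\phi_i(x)=(f_i(x)-c)\,r(x)=u_{i,\max}=p_i-c,\qquad x\in[L_i,L_{i-1}].
\end{equation*}
The values $\psi_i(x)=0$ for $x<L_i$ and $\psi_i(x)=1$ for $x>L_{i-1}$ follow immediately from Theorem~\ref{thm3}, so only the middle branch of \eqref{c5} requires work.

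Next I would compute $r(x)$ for $x\in[L_i,L_{i-1}]$ by analysing the penalty posted by any one of the $l-1$ competitors. A competitor is in state $0$ with probability $1-q$ and posts penalty $v+1>x$; in state $j\ge 1$ (probability $q_j$) it draws a penalty from $\psi_j$, whose support is $[L_j,L_{j-1}]$. Theorem \ref{thm2} places this support entirely above $x$ when $j<i$ and entirely at or below $x$ when $j>i$, while the continuity of every $\psi_j$ (Theorem \ref{thm1}) rules out atoms at the boundary cases $x=L_{i-1}$ (for $j=i-1$) and $x=L_i$ (for $j=i+1$). Hence the probability that an arbitrary competitor posts penalty strictly less than $x$ equals
\begin{equation*}
\alpha(x)\;:=\;\sum_{j=i+1}^{n}q_j+q_i\psi_i(x),
\end{equation*}
and, because the primary sells exactly when fewer than $m$ of the $l-1$ competitors undercut it, $r(x)=1-w(\alpha(x))$ with $w(\cdot)$ as in \eqref{d4}.

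Substituting into the indifference condition gives $w(\alpha(x))=(f_i(x)-p_i)/(f_i(x)-c)$, and since $w$ is continuous and strictly increasing on $[0,1]$ it may be inverted; solving for $\psi_i(x)$ recovers the middle branch of \eqref{c5}. The principal obstacle is verifying that the argument of $w^{-1}$ always lies in the range of $w$ and that the derived formula meshes with the endpoint values $\psi_i(L_i)=0$ and $\psi_i(L_{i-1})=1$; both facts are obtained by substituting the recursions \eqref{n51}, \eqref{n52} into the ratio (using \eqref{imp} to guarantee $f_i(x)-c>0$ on the support), after which the ratio is seen to equal $w_{i+1}$ at $x=L_i$ and $w_i$ at $x=L_{i-1}$, so that $\alpha(L_i)=\sum_{j=i+1}^{n}q_j$ and $\alpha(L_{i-1})=\sum_{j=i}^{n}q_j$ exactly as required. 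The bookkeeping of which competitors can contribute mass near the boundary points is handled entirely by the no-atom conclusion of Theorem \ref{thm1} combined with the disjoint, ordered supports of Theorem \ref{thm2}, and uniqueness follows because every step from indifference to the closed form is reversible.
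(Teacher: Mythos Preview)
Your proposal is correct and follows essentially the same route as the paper: use Theorems~\ref{thm1}--\ref{thm3} to pin down the support as $[L_i,L_{i-1}]$, invoke the indifference condition $\phi_i(x)=p_i-c$ on that support, identify $r(x)=1-w\bigl(\sum_{j=i+1}^{n}q_j+q_i\psi_i(x)\bigr)$ from the ordered supports, and invert $w$. The only difference is packaging: the paper relies on the already-established identity $r(x)=1-F(x)$ with $F(x)=w(\sum_j q_j\psi_j(x))$ (Lemma~\ref{fcont} and \eqref{ex2}) rather than rederiving $\alpha(x)$ from scratch, and it defers the endpoint checks $\psi_i(L_i)=0$, $\psi_i(L_{i-1})=1$ to the subsequent Lemma~\ref{dcont} rather than folding them into this proof.
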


 % which will correspond to a unique $y\in [0,1]$. Thus,$\psi_i(x)$ is well defined
%\begin{align}\label{c5}
%\psi_i(x)=\dfrac{1}{q_i}(w^{-1}(1-\dfrac{p_i-c}{f_i(x)-c})-\sum_{j=i+1}^{n}q_j)
%\end{align}
%and unique.
Next lemma will ensure that $\psi_i(\cdot)$ as defined in lemma~\ref{lm:computation} is indeed a d.f.

\begin{lem}\label{dcont}
$\psi_i(\cdot)$ as defined in Lemma~\ref{lm:computation} is a strictly increasing and continuous distribution function.
\end{lem}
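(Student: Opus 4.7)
The plan is to verify, in order, that $\psi_i(\cdot)$ as written (a) takes limits $0$ and $1$ at $\mp\infty$, (b) is continuous everywhere, and (c) is strictly increasing on its support $[L_i, L_{i-1}]$. The outer two branches supply the limit values automatically and are trivially continuous and constant, so the real work reduces to two tasks: checking that the middle branch matches the value $0$ at $x = L_i$ and $1$ at $x = L_{i-1}$, so the three pieces glue together continuously at the joins; and showing that on $[L_i, L_{i-1}]$ the middle branch is itself strictly increasing and continuous.

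For the endpoint matching I would start from the identity $(f_i(x) - p_i)/(f_i(x) - c) = 1 - (p_i - c)/(f_i(x) - c)$. At $x = L_i$, applying $f_i$ to the defining equation $L_i = g_i\bigl((p_i - c)/(1 - w_{i+1}) + c\bigr)$ in (\ref{n52}) gives $f_i(L_i) - c = (p_i - c)/(1 - w_{i+1})$, so the identity collapses to $1 - (1 - w_{i+1}) = w_{i+1}$; since $w^{-1}(w_{i+1}) = \sum_{j=i+1}^n q_j$ by the definition of $w_{i+1}$, the outer factor $1/q_i$ delivers $\psi_i(L_i) = 0$. At $x = L_{i-1}$, (\ref{n51}) rearranges to $(p_i - c)/(f_i(L_{i-1}) - c) = 1 - w_i$, the same identity now yields $w_i$, and $w^{-1}(w_i) - \sum_{j=i+1}^n q_j = q_i$, giving $\psi_i(L_{i-1}) = 1$.

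For strict monotonicity and continuity on $[L_i, L_{i-1}]$, I would decompose the middle branch as the composition $x \mapsto f_i(x) \mapsto 1 - (p_i - c)/(f_i(x) - c) \mapsto (w^{-1}(\cdot) - \sum_{j=i+1}^n q_j)/q_i$. The map $f_i$ is strictly increasing and continuous by the standing hypothesis on the penalty function; the middle map is strictly increasing on $(c, \infty)$ because $p_i - c > 0$ (noted in the paragraph containing (\ref{imp})), and its input $f_i(x)$ stays strictly above $c$ on $[L_i, L_{i-1}]$ by (\ref{imp}) and the monotonicity of $f_i$; the outer map is an affine transformation of $w^{-1}$, which is strictly increasing and continuous on $[0,1]$ per the remark immediately after (\ref{n50}). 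Composition preserves both properties, and the endpoint evaluations combined with monotonicity keep the argument of $w^{-1}$ inside $[w_{i+1}, w_i] \subseteq [0,1]$, so $w^{-1}$ is applied only within its domain.

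The main obstacle is really just the boundary algebra: the entire lemma hinges on the specific choices of $p_i$ and $L_i$ in (\ref{n51}) and (\ref{n52}) being precisely the values that force $(f_i(\cdot) - p_i)/(f_i(\cdot) - c)$ to equal $w_{i+1}$ at $L_i$ and $w_i$ at $L_{i-1}$. Once those two identities are verified, strict monotonicity and continuity follow by routine composition arguments and no further structural input is required beyond Lemma~\ref{lu} and (\ref{imp}).
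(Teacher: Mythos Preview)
Your proposal is correct and follows essentially the same approach as the paper: verify the endpoint values $\psi_i(L_i)=0$ and $\psi_i(L_{i-1})=1$ using (\ref{n52}) and (\ref{n51}) respectively, then obtain continuity and strict monotonicity on $[L_i,L_{i-1}]$ by expressing the middle branch as a composition of continuous, strictly increasing maps (with $f_i(x)>c$ from (\ref{imp}) ensuring the denominator is safe and $w^{-1}$ inherited from the monotonicity of $w$). Your write-up is in fact slightly more careful than the paper's in that you explicitly track that the argument of $w^{-1}$ stays in $[w_{i+1},w_i]\subseteq[0,1]$, which the paper leaves implicit.
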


%Note that $L_i>g_i(c)$, so, $f_i(L_i)>c$. Also note that
%\begin{align}\label{n52a}
%& p_{i}-c=(f_{i}(L_{i})-c)(1-w_{i+1})\nonumber\\
%& p_i-c-(f_i(L_{i-1})-f_{i-1}(L_{i-1}))(1-w_i)\nonumber\\& =(f_{i-1}(L_{i-1})-c)(1-w_i)\nonumber\\
%& p_i-c=(f_i(L_{i-1})-c)(1-w_i)
%\end{align}
%and
%\begin{align}\label{n52b}
%p_{i-1}-c=(p_i-c)\dfrac{f_{i-1}(L_{i-1})-c}{f_i(L_i)-c}
%\end{align}

\subsection{Existence of a symmetric NE}
\label{sec:existence}
In this  section, We prove that symmetric strategy profile identified in previous section is indeed a NE strategy profile.
\begin{thm}\label{thm4} $\left(\psi_1(\cdot), \ldots, \psi_n(\cdot)\right) j=1,..,n$ as defined in lemma~\ref{lm:computation} is  a symmetric NE.
\end{thm}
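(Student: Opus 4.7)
The plan is to verify directly that, when every competitor plays the candidate profile $(\psi_1,\ldots,\psi_n)$ from Lemma~\ref{lm:computation}, a tagged primary whose channel is in state $j$ can do no better than $p_j-c=u_{j,max}$, with equality on $[L_j,L_{j-1}]$. Concretely, I would pick an arbitrary penalty $x\in\Re$ for the tagged primary, compute the sale probability $r(x)$ and payoff $\phi_j(x)=(f_j(x)-c)r(x)$ induced by the competitors' strategies, and show $\phi_j(x)\le p_j-c$ everywhere, with equality on the claimed support.

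The first step is to compute $r(x)$ on each of the ``zones'' $[L_k,L_{k-1}]$, $k=1,\ldots,n$ (with $L_0=v$). Because each competitor plays $\psi_i$ conditional on state $i$ (which has probability $q_i$) and, by Theorems~\ref{thm1}--\ref{thm3}, the $\psi_i$'s are continuous with supports $[L_i,L_{i-1}]$ that share only endpoints, the probability that a single competitor posts a penalty $\le x$ simplifies, for $x\in[L_k,L_{k-1}]$, to $q_k\psi_k(x)+\sum_{i>k}q_i$. Plugging in the explicit form of $\psi_k$ from Lemma~\ref{lm:computation} gives $P_x=w^{-1}\!\bigl(\tfrac{f_k(x)-p_k}{f_k(x)-c}\bigr)$. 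Since a sale requires fewer than $m$ of the $l-1$ other primaries to post a strictly smaller penalty and the $\psi_i$'s have no atoms, $r(x)=1-w(P_x)=\tfrac{p_k-c}{f_k(x)-c}$ on $[L_k,L_{k-1}]$; separately, $r(x)=1$ for $x\le L_n$ and $r(x)=0$ for $x>v$.

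Given $r(x)$, the payoff $\phi_j$ is piecewise explicit: on the own zone $[L_j,L_{j-1}]$ the factors of $f_j(x)-c$ cancel and $\phi_j(x)\equiv p_j-c$, while on a foreign zone $[L_k,L_{k-1}]$ with $k\neq j$ one has $\phi_j(x)=(p_k-c)\cdot\tfrac{f_j(x)-c}{f_k(x)-c}$. The main obstacle, and the sole place where assumption~(\ref{con1}) really enters, is to bound $\phi_j$ on foreign zones. That assumption says $\tfrac{f_k(\cdot)-c}{f_j(\cdot)-c}$ is strictly increasing whenever $k<j$; equivalently $\tfrac{f_j(x)-c}{f_k(x)-c}$ is strictly \emph{decreasing} in $x$ for $k<j$ and strictly \emph{increasing} for $k>j$. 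Hence $\phi_j(x)$ is monotone on every foreign zone: decreasing on zones above the own zone and increasing on zones below it.

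Monotonicity together with continuity then reduces the proof to boundary equalities. At $L_{j-1}=U_j$ the own-zone formula yields $\phi_j(L_{j-1})=p_j-c$, and the foreign-zone formulas agree with this value at shared endpoints, so the bound propagates upward: $\phi_j$ is globally decreasing on $[L_{j-1},v]$, hence bounded by $p_j-c$ there, by a simple induction walking up the zone chain. Symmetrically, the defining relation $L_j=g_j\!\bigl(\tfrac{p_j-c}{1-w_{j+1}}+c\bigr)$ rearranges to $\phi_j(L_j)=(f_j(L_j)-c)(1-w_{j+1})=p_j-c$ when read from the neighbouring zone $j+1$, and monotone increase then carries the bound down through every lower zone and into the region $x\le L_n$, where $\phi_j(x)=f_j(x)-c$ is bounded by $\phi_j(L_n)\le p_j-c$. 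The trivial case $x>v$ giving $\phi_j(x)=0\le p_j-c$ finishes the argument, so every $x\in[L_j,L_{j-1}]$ is a best response and $(\psi_1,\ldots,\psi_n)$ is a symmetric NE.
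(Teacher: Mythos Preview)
Your argument is correct and in fact cleaner than the paper's. Both proofs begin identically: compute $r(x)$ on each zone $[L_k,L_{k-1}]$, obtain $\phi_j(x)=(p_k-c)\dfrac{f_j(x)-c}{f_k(x)-c}$ on foreign zones, and verify $\phi_j\equiv p_j-c$ on the own zone. The divergence is in how the foreign-zone bound is established. The paper invokes an auxiliary product identity (its Observation~\ref{recurse}), writes $p_j-c=(p_k-c)\prod_{i}\dfrac{f_{i+1}(L_i)-c}{f_i(L_i)-c}$, applies~(\ref{con1}) term by term to replace each factor by the corresponding ratio in $f_j$, and telescopes. You bypass this entirely: from~(\ref{con1}) you read off that $\phi_j$ is monotone on every foreign zone (decreasing for $k<j$, increasing for $k>j$), note that $r(\cdot)$ and hence $\phi_j(\cdot)$ is continuous across zone boundaries because both one-sided values of $r(L_{k-1})$ equal $1-w_k$, and then walk the bound outward from the own zone by induction. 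Your route is more transparent and avoids the algebraic bookkeeping; the paper's route has the minor advantage of making the slack in the inequality explicit.

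One small point you skate over: assumption~(\ref{con1}) is stated only for $x>y>g_j(c)$ (with $j$ the smaller index), so on lower zones your monotonicity claim for $\dfrac{f_j(x)-c}{f_k(x)-c}$ is justified only where $f_j(x)>c$. The paper handles this explicitly in its Case~ii by observing that when $f_j(x)\le c$ the payoff is nonpositive and hence trivially below $p_j-c$; you should add the same one-line remark before invoking monotonicity on zones $k>j$.
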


% % % % % % %

\paragraph{Remark} Note that all our results readily generalize to allow for random number of secondaries (M) with probability mass functions (p.m.f.) $\Pr(M=m)=\gamma_m$. A primary does not have the exact realization of number of secondaries, but it knows the p.m.f. . We only have to redefine  $w(x)$ as-
\begin{eqnarray}
\sum_{k=0}^{\max(M)}\gamma_k\sum_{i=k}^{l-1}\dbinom{l-1}{i}x^i(1-x)^{l-1-i}
\end{eqnarray}
and $w_{n+1}=\gamma_0$ .\\

\section{Performance evaluation of the symmetric NE }
\label{numerical}
\begin{defn}
Let $R_{NE}$ denote  the total expected profit at Nash equilibrium. Then,
\begin{eqnarray}
R_{NE}=l.\sum_{i=1}^{n}(q_i.(p_i-c))
\end{eqnarray}
\end{defn}

\begin{lem}\label{eff}
Let $c_{j}=g_j(c), j=1,..,n$.
\begin{enumerate}
\item If $m\geq (l-1)(\sum_{j=1}^n q_j +\epsilon)$ for some $\epsilon>0$, then $R_{NE}\rightarrow l.\sum_{j=1}^{n}q_j.(f_{j}(v)-c)$ as $l\rightarrow \infty.$
\item If $ (l-1)(\sum_{j=i-1}^{n}q_j-\epsilon)\geq m \geq (l-1)(\sum_{j=i}^{n}q_j+\epsilon), i\in\{2,..,n\}$, for some $\epsilon>0$, then
$R_{NE}\rightarrow l.\sum_{j=i}^{n}q_j.(f_{j}(c_{i-1})-c)$ as $l\rightarrow \infty.$
\item If $m\leq (l-1)(q_n-\epsilon)$ for some $\epsilon>0$, then $R_{NE}\rightarrow 0$ as $l\rightarrow \infty.$
\end{enumerate}
\end{lem}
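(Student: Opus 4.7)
The plan is to (i) use binomial tail concentration to determine the asymptotic behavior of each $w_k$, (ii) propagate these limits through the recursion (\ref{n51})--(\ref{n52}) for $p_k$ and $L_k$, and (iii) sum to obtain the limit of $R_{NE}$.

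First, observe that $w_k$ is the probability that a $\mathrm{Bin}(l-1, s_k)$ variable is at least $m$, where $s_k = \sum_{j=k}^n q_j$. Hoeffding's inequality yields exponential decay: if $s_k \geq m/(l-1) + \epsilon$ then $1 - w_k \leq e^{-2\epsilon^2 (l-1)}$, while if $s_k \leq m/(l-1) - \epsilon$ then $w_k \leq e^{-2\epsilon^2 (l-1)}$. In either regime the relevant quantity, even after multiplication by $l$, tends to zero. Under the hypothesis of Case 1, every $s_k \leq s_1$ lies below $m/(l-1) - \epsilon$, so all $w_k \to 0$ exponentially fast; under Case 2, $w_k \to 1$ for $k \leq i-1$ and $w_k \to 0$ for $k \geq i$; under Case 3, $s_k \geq q_n \geq m/(l-1) + \epsilon$ for all $k$, so all $w_k \to 1$ exponentially fast.

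Next, I would invoke Theorem \ref{thm3} to get the a priori ordering $v = L_0 \geq L_1 \geq \cdots \geq L_n$, so $f_k(L_{k-1}) \leq f_k(v)$ is bounded uniformly in $l$. Then I propagate the limits inductively. In Case 1, starting from $L_0 = v$ and using $w_k \to 0$: the continuity of $f_k$ together with the inductive hypothesis $L_{k-1} \to v$ gives $p_k - c = (f_k(L_{k-1}) - c)(1 - w_k) \to f_k(v) - c$, and then $L_k = g_k((p_k - c)/(1 - w_{k+1}) + c) \to g_k(f_k(v)) = v$. Summing, $R_{NE}/l = \sum_k q_k (p_k - c) \to \sum_k q_k (f_k(v) - c)$, giving part 1. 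For Case 3, the a priori bound yields $R_{NE} \leq l \sum_k q_k (f_k(v) - c)(1 - w_k) \to 0$ since $l(1 - w_k) \to 0$ and the other factor is bounded, giving part 3.

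Case 2 is handled by splitting at the transition index $i-1$. For $k \leq i-1$, $l\,|p_k - c| \leq l(f_k(v) - c)(1 - w_k) \to 0$ by the exponential decay of $1 - w_k$, so these terms contribute nothing to $R_{NE}/l$ in the limit. At the transition, $p_{i-1} - c \to 0$ while $1 - w_i \to 1$, so $L_{i-1} = g_{i-1}((p_{i-1} - c)/(1 - w_i) + c) \to g_{i-1}(c) = c_{i-1}$. A forward induction for $k \geq i-1$ now preserves this common limit: $L_{k-1} \to c_{i-1}$ combined with $w_k \to 0$ gives $p_k - c \to f_k(c_{i-1}) - c$ and $L_k \to g_k(f_k(c_{i-1})) = c_{i-1}$. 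Summing yields $R_{NE}/l \to \sum_{k \geq i} q_k (f_k(c_{i-1}) - c)$, giving part 2. The main obstacle lies here: one must verify carefully that the boundary value $c_{i-1}$ emerges exactly at $L_{i-1}$ and then propagates forward unchanged, despite $L_k$ being defined by a recursion that couples $p_k$ (which depends on $L_{k-1}$) with denominators $1 - w_{k+1}$ whose asymptotic behavior changes precisely at the transition.
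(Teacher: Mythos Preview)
Your proposal is correct and follows essentially the same route as the paper: determine the asymptotics of each $w_k$ via binomial concentration, then push these limits through the recursion for $p_k$ and $L_k$. Parts~2 and~3 match the paper almost step for step; your forward induction for $k\geq i$ in Part~2 is in fact cleaner than the paper's argument, which informally writes $p_j\to f_j(L_{j-1})$ and $p_j\to f_j(L_j)$ with moving targets and then concludes $L_j\to L_{j-1}$.

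The one genuine difference is Part~1. You run the same forward induction ($L_{k-1}\to v\Rightarrow p_k-c\to f_k(v)-c\Rightarrow L_k\to v$), which works but is more than needed. The paper instead uses a direct squeeze: the trivial upper bound $R_{NE}\leq l\sum_j q_j(f_j(v)-c)$ together with the NE best-response property, which gives $p_j-c\geq \phi_j(v)=(f_j(v)-c)(1-w_1)$ for every $j$, since selecting penalty $v$ is a feasible deviation. This avoids the recursion entirely and only requires $w_1\to 0$ (so the weak law suffices, no Hoeffding needed). A minor related remark: in Part~2 you claim $l\,|p_k-c|\to 0$ for $k<i$, which is true but stronger than necessary, since $R_{NE}/l=\sum_k q_k(p_k-c)$ already has the factor of $l$ removed; plain $p_k-c\to 0$ (hence WLLN) is enough there. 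Exponential decay is genuinely needed only in Part~3, where $R_{NE}$ itself (not $R_{NE}/l$) must go to zero.
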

Note that, for $j>i$, $c_j<c_i$ (as $g_j(c)<g_i(c)$), thus, asymptotically $R_{NE}$ decreases as $m$ decreases. This is expected as  competition increases with decrease in $m$, and thus prices
 are chosen progressively closer to the lower limit, that of the transition cost,  $c.$

\begin{defn}
Let $R_{OPT}$ be  the maximum  expected profit earned through collusive selection of
 prices by the primaries. \emph{Efficiency} $\eta$ is defined as $\dfrac{R_{NE}}{R_{OPT}}.$
\end{defn}
Efficiency is a measure of the reduction in the expected profit owing to competition.
Asymptotic behavior of $\eta$ is characterized by the following lemma:
\begin{lem}\label{thresh}
\begin{enumerate}
 \item If $m\geq (l-1)(\sum_{j=1}^{n}q_j+\epsilon)$ for some $\epsilon>0$, then $\eta\rightarrow 1$ as $l\rightarrow \infty$.
     \item If $m\leq (l-1)(q_n-\epsilon)$ for some $\epsilon>0$, then $\eta\rightarrow 0$ as $l\rightarrow \infty$.
         \end{enumerate}
\end{lem}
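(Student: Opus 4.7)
The plan is to estimate $R_{OPT}$ asymptotically in each of the two regimes and combine those estimates with the values of $R_{NE}$ supplied by Lemma~\ref{eff}. In both parts I will exhibit an explicit collusive strategy for the primaries and invoke a binomial concentration argument.

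For part~1, Lemma~\ref{eff}(1) already gives $R_{NE}\to l\sum_{j=1}^{n} q_j(f_j(v)-c)$, so it suffices to show that $R_{OPT}$ tends to the same value. I first note the trivial upper bound $R_{OPT}\le l\sum_{j=1}^{n} q_j(f_j(v)-c)$: a primary whose channel is in state $j$ can never earn more than $f_j(v)-c$ per sale, since no secondary buys at a penalty exceeding $v$. For the matching lower bound I analyze the collusive strategy in which every primary, whenever its channel is available, prices it so that the induced penalty equals exactly $v$. Letting $Y$ denote the binomial$(l,\sum_j q_j)$ number of available channels, the hypothesis $m\ge (l-1)(\sum_j q_j+\epsilon)$ together with Hoeffding's inequality yields $\Pr(Y>m)\to 0$; hence with probability $1-o(1)$ every available channel is sold, so the total expected profit is at least $l\sum_{j=1}^{n} q_j(f_j(v)-c)(1-o(1))$. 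The two bounds pinch and $\eta\to 1$.

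For part~2, Lemma~\ref{eff}(3) gives $R_{NE}\to 0$, so it suffices to show that $R_{OPT}$ stays bounded below by a positive constant. I propose the collusive strategy that prices every state-$n$ channel at penalty $v$ and prices every channel in a state $j<n$ above $v$, so that only state-$n$ channels are ever sold. Letting $Y_n$ denote the binomial$(l,q_n)$ number of state-$n$ channels, the hypothesis $m\le (l-1)(q_n-\epsilon)$ yields $\Pr(Y_n<m)\to 0$ by Hoeffding's inequality, so $E[\min(Y_n,m)]\ge m(1-o(1))\ge 1-o(1)$ whenever $m\ge 1$. Therefore $R_{OPT}\ge (f_n(v)-c)(1-o(1))$, a positive constant independent of $l$, while $R_{NE}\to 0$; together these give $\eta\to 0$.

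The hard step is each lower bound on $R_{OPT}$: the upper bound in part~1 and the two $R_{NE}$ asymptotics are immediate consequences of Lemma~\ref{eff} and the definition of the penalty function. The lower bounds themselves reduce to standard binomial concentration once the collusive strategy is chosen. The only genuinely subtle point is that both collusive strategies price many channels at exactly the penalty $v$, creating ties; but because every tied offer yields identical per-primary payoff conditional on state, the symmetric random tie-breaking does not diminish the expected revenue and the argument goes through.
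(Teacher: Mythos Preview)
Your proof is correct and follows essentially the same route as the paper: bound $R_{OPT}$ appropriately and invoke Lemma~\ref{eff} for $R_{NE}$, with the lower bound on $R_{OPT}$ in part~2 coming from the ``sell only state-$n$ channels at penalty $v$'' strategy together with binomial concentration (the paper uses Chebyshev where you use Hoeffding). The one place you work harder than necessary is part~1: you construct an explicit collusive strategy to lower-bound $R_{OPT}$, but the paper simply uses the trivial fact $\eta\le 1$ (equivalently $R_{OPT}\ge R_{NE}$), which together with the upper bound $R_{OPT}\le l\sum_j q_j(f_j(v)-c)$ and Lemma~\ref{eff}(1) already forces $\eta\to 1$.
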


The lemma does not characterize the asymptotic limits for $\eta$  for $m \in [(l-1)\sum_{j=1}^{n}q_j,   (l-1)q_n].$
However, our numerical computation reveals that $\eta$ increases from $0$ to $1$ with increase in $m$ (figure ~\ref{fig:Efficiency}) - the variation is largely monotonic barring for a few discrepancies owing
to $n, m$ being finite.  Intuitively, demand increases with increase in  $m$; thus primaries set their penalties close to the highest possible value for all states which leads to higher efficiency. On the other hand, when $m$ decrease, competition becomes intense and primaries chooses prices close to $c$ and expected profits under the symmetric NE decreases: $R_{NE}$ is very small as lemma ~\ref{eff} reveals. But, if primaries collude,  primaries can judiciously offer only the channels of highest possible states to the secondaries to gain a large profit. Hence, the decrease in $R_{OPT}$ with decrease in $m$ is slower,  which leads to lower efficiency for low $m.$

Similarly, when $q_i$s increases, competition becomes intense and primaries chooses price closer to $c$, hence $R_{NE}$ decrease. But, when primaries collude, they still sell at highest possible penalty for a channel and hence $\eta$ decrease. On the other hand, when, $q_i$s decrease, primaries set their prices closer to highest possible values for all states and thus, $\eta$ increase.

\begin{figure}
\includegraphics[width=90mm]{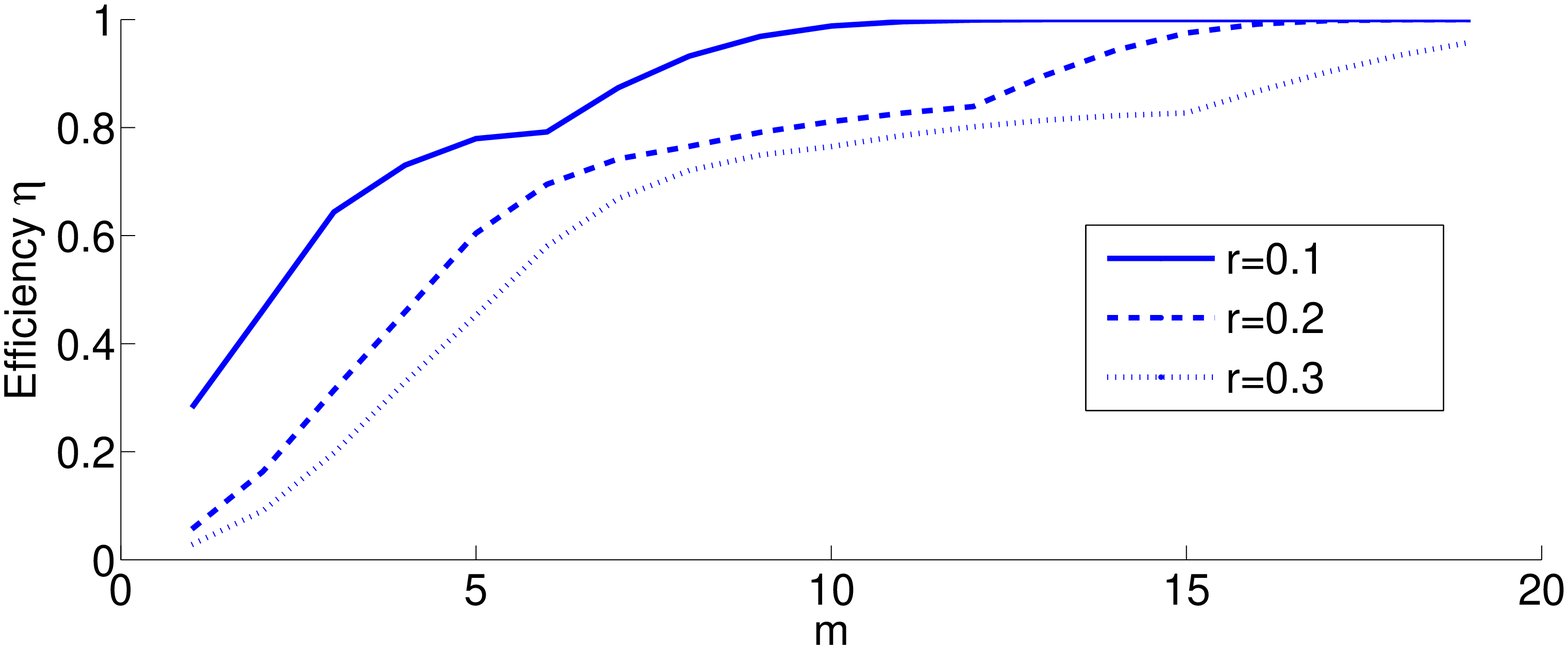}
\caption{Efficiency versus m for three different sets of values of probabilities for $l=20$ and $n=3$, $q_1=q_2=q_3=r$, $v=100, c=1$, $g_i(x)=x^{10}-i^7.$}
\label{fig:Efficiency}
\end{figure}

\section{Discussion And Future Work}
For analytical tractability, we have assumed symmetric setting. An important problem for future research is to characterize NE under asymmetric setting. We have assumed secondaries have the same penalty function. Another interesting extension would be the characterization of NE strategy profile  when the assumption is relaxed.

\section{Appendix }\label{sec:appen}
First, we introduce some terminologies and observation that we will use throughout this section.
\begin{defn}\label{dm}
 Let $X_{m}$ be the $m$th smallest offered penalty offered by  primaries $i=2,\ldots,l$, and  let $F(\cdot)$ denote the distribution function of $X_m$.
\end{defn}
For a symmetric strategy profile, $F(\cdot)$ would remain the same if we had considered any $l-1$ primaries
rather than $2, \ldots, l$.

\begin{obs}\label{o1}
Any point $y \leq g_j(c)$ can not be a best response (definition \ref{br}) for channel state $j$.
\end{obs}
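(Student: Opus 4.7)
The plan is to combine two facts: first, that $\phi_j(y) \leq 0$ whenever $y \leq g_j(c)$; and second, that there exists at least one penalty $y^{\star}$ at which $\phi_j(y^{\star}) > 0$. Together these imply that no $y \leq g_j(c)$ can attain $\sup_y \phi_j(y)$, and so by Definition~\ref{br} it cannot be a best response.

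For the first fact, I would invoke the expression $\phi_j(y) = (f_j(y)-c)\, r(y)$ from \eqref{ex1}. Because $g_j(\cdot)$ is continuous and strictly increasing with inverse $f_j(\cdot)$, we have $f_j(g_j(c)) = c$, and so $y \leq g_j(c)$ forces $f_j(y) \leq c$. Since $r(y) \in [0,1]$, this immediately gives $\phi_j(y) \leq 0$.

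For the second fact, I would exhibit a test penalty $y^{\star}$ with $\phi_j(y^{\star}) > 0$; the natural choice is $y^{\star} = v$. The model implicitly requires $v > g_j(c)$ for each state $j$ (otherwise no profitable trade is ever possible in state $j$), so $f_j(v) - c > 0$. It remains to argue that $r(v) > 0$. Regardless of the strategy profile $S_{-i}$ used by the other $l-1$ primaries, the event that every one of their channels is in state $0$ occurs with probability $(1-q)^{l-1} > 0$, and on this event none of them offers a penalty $\leq v$, so a primary quoting $v$ wins the sale. Thus $r(v) \geq (1-q)^{l-1} > 0$, giving $\phi_j(v) > 0 \geq \phi_j(y)$ for every $y \leq g_j(c)$.

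The only mild subtlety, and hence the main thing to be careful about, is step two: because the payoff is discontinuous in the penalty, one cannot argue existence of a strictly profitable deviation by a continuity or limit argument from the strategy profile itself. Choosing the concrete test point $y^{\star}=v$ sidesteps this, since the lower bound $(1-q)^{l-1}$ on the sale probability is obtained from the exogenous state distribution rather than from the opponents' strategies, and therefore holds uniformly in $S_{-i}$.
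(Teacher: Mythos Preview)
Your proposal is correct and follows essentially the same approach as the paper: show $\phi_j(y)\le 0$ for $y\le g_j(c)$ via \eqref{ex1}, and exhibit a strictly profitable penalty by using the fact that all competitors' channels are unavailable with probability $(1-q)^{l-1}>0$ (the paper phrases this as ``$0<\sum_{i=1}^n q_i<1$'' and asserts $\phi_j(x)>0$ on the whole interval $(g_j(c),v]$, while you specialize to $y^\star=v$, which suffices). Your explicit lower bound $r(v)\ge (1-q)^{l-1}$, valid uniformly in $S_{-i}$, is exactly the content behind the paper's one-line justification.
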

The observation is evident as  the profit $\phi_j(\cdot)$  of a primary is non-positive
if the selected penalty is  $\leq g_j(c)$. But, $\phi_j(x) > 0$ for  $g_j(c)<x\leq v$ as $0<\sum_{i=1}^{n}q_i<1$.
\subsection{Proof of Section ~\ref{sec:structure}}
\textit{Proof of Theorem~\ref{thm1}}:
Suppose, $\psi_{j}(\cdot)$ has a jump\footnote{A d.f. $G(x)$ is said to have a jump at $y$ of size $a$, if $G(y)-G(y-)=b$, where $G(y-)=\lim_{x\uparrow y}G(x)$} at $x$, then all primaries select $x$ as their penalties with positive probability whenever their channel states are $j.$ As, no primary selects a penalty other than a best response with positive probability, thus, $x$ has to be a best response to primaries whenever their channel state is $j$.  Hence, Observation ~\ref{o1} entails that $x>g_j(c)$.  We will complete the proof by showing that primary 1 can attain better expected profit by choosing penalty just below $x$, when channel state is $j$.\\ We have,
\begin{equation}\label{pt1}
r(x)=P(X_{m}>x)+r(x|X_{m}=x)P(X_{m}=x)
\end{equation}
For every $\epsilon>0$, it is worthy to note the following relation
\begin{align}\label{pt2}
r(x-\epsilon)\geq P(X_{m}>x)+P(X_{m}=x)
\end{align}
Thus, from (\ref{pt1}) and (\ref{pt2}), for every $\epsilon>0$,
\begin{align}\label{ep}
r(x-\epsilon)- r(x)& \geq P(X_{m}=x)\cdot(1- r(x|X_{m}=x))\nonumber\\
&  =\gamma
\end{align}
where, $\gamma=P(X_m=x)(1- r(x|X_{m}=x))$. Note that $ r(x|X_{m}=x)<1$ and $P(X_m=x)>0$, due to symmetry and $l>m$. So, $\gamma>0$.\\
Now, expected profit to primary 1 for channel state $j$ at penalty $x-\epsilon$ is:
\begin{align}\label{n3}
 \phi_j(x-\epsilon)& =(f_j(x-\epsilon)-c)r(x-\epsilon)\quad (\text{from}(\ref{ex1}))\nonumber\\
& \geq(f_j(x-\epsilon)-c)(r(x)+\gamma)
\end{align}
Let, $\delta=f_j(x)-f_j(x-\epsilon)$, so from (\ref{n3})
\begin{align}
\phi_j(x-\epsilon)& \geq (f_j(x)-\delta-c)(r(x)+\gamma)\nonumber\\
                                     & =(f_j(x)-c)(r(x)+\gamma)-\delta(r(x)+\gamma)
\end{align} 
$f_j(\cdot)$ is continuous and strictly increasing ($f_j(x)>c$),$\gamma>0$ and independent of $\epsilon$, so $\exists \epsilon>0$,  such that 
\begin{align}
\phi_j(x-\epsilon)>(f_j(x)-c)r(x)=\phi_j(x)\nonumber
\end{align}which contradicts that $x$ is a best response and hence, the result follows.\qed

Now, we will show the following lemma and observation, which will facilitate our later analysis.

\begin{lem}\label{fcont}
$F(\cdot)$ is continuous in $[c_{min},v]$ and if $\sum_{j=1}^{n}\psi_j(y)>\sum_{j=1}^{n}\psi_j(x)$, then $F(y)>F(x)$
\end{lem}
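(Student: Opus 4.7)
The plan is to identify $F$ as the composition $w \circ H$ where $H$ is the d.f.\ of a single competing primary's offered penalty, after which both continuity and strict monotonicity reduce to standard properties of $w$ and the $\psi_j$'s. First I would fix $x \in [c_{min}, v]$ and condition on the state of a competing primary's channel: in state $j \geq 1$ (probability $q_j$) its offered penalty has d.f.\ $\psi_j$, while in state $0$ (probability $1-q$) it offers the penalty $v+1 > x$. By the law of total probability, the probability that a single competing primary offers a penalty $\leq x$ is
\begin{equation*}
H(x) \;:=\; \sum_{j=1}^{n} q_j\,\psi_j(x).
\end{equation*}
Since the $l-1$ offers from primaries $2,\ldots,l$ are i.i.d., the event $\{X_m \leq x\}$ is precisely the event that at least $m$ out of $l-1$ i.i.d.\ Bernoulli trials with success probability $H(x)$ succeed, so $F(x) = w(H(x))$ by \eqref{d4}.

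Continuity of $F$ on $[c_{min}, v]$ is then immediate: Theorem~\ref{thm1} yields continuity of every $\psi_j$, hence of $H$; the polynomial $w$ is continuous on all of $\Re$; and since $H$ takes values in $[0, q] \subseteq [0, 1]$, the composition $F = w \circ H$ is continuous on $[c_{min}, v]$.

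For the second claim, suppose $\sum_{j=1}^{n}\psi_j(y) > \sum_{j=1}^{n}\psi_j(x)$. Since each $\psi_j$ is non-decreasing (being a d.f.), the case $y \leq x$ would give $\psi_j(y) \leq \psi_j(x)$ for every $j$, contradicting the hypothesis; so $y > x$ and there must be some index $k$ with $\psi_k(y) > \psi_k(x)$. Because $q_k > 0$ and $\psi_j(y) \geq \psi_j(x)$ for all $j$, the strict gap survives the $q_j$-weighting, giving $H(y) > H(x)$. The function $w$ is strictly increasing on $[0,1]$ (noted immediately after \eqref{n50} and attributed to \cite{walsh}), so $F(y) = w(H(y)) > w(H(x)) = F(x)$.

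The only subtlety is the transfer of the strict gap from $\sum_j \psi_j$ to the weighted sum $H = \sum_j q_j \psi_j$, and it goes through cleanly because every $q_j$ is positive and every $\psi_j$ is monotone: a strict gap at one coordinate $k$ cannot be cancelled out by the other coordinates. Modulo this short observation the lemma is a direct consequence of Theorem~\ref{thm1} (continuity of each $\psi_j$) and the strict monotonicity of $w$ already recorded in the paper.
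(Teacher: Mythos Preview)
Your proof is correct and follows essentially the same route as the paper's: write $F(x)=w\!\left(\sum_{j=1}^{n}q_j\psi_j(x)\right)$, invoke Theorem~\ref{thm1} for continuity of the $\psi_j$, and use strict monotonicity of $w$ on $[0,1]$. The only place you are slightly more explicit than the paper is in bridging from the unweighted sum $\sum_j\psi_j$ in the hypothesis to the $q_j$-weighted sum that actually governs $F$; the paper leaves this implicit, whereas you argue it via monotonicity of each $\psi_j$ and positivity of the $q_j$.
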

where, $c_{min}=\underset{i\in\{1,..,n\}}{\text{min}}g_i(c)$
\begin{proof}
Suppose $a\in [c_{min},v]$.  At any time slot, the event that primary 1 selects penalty less than or equal to $a$ and state of a channel is $i\geq 1$, occurs with probability $q_i\cdot\psi_i(a)$. Hence, the event that primary 1 offers penalty less than or equal to $a$ occurs with probability $\sum_{i=1}^{n}q_i\cdot\psi_i(a)$. Thus, 
\begin{align}
F(a)=P(X_m\leq a)=w(\sum_{i=1}^{n}\psi_i(a))\quad(\text{Recall}(\ref{d4}))\nonumber
\end{align}
Continuity of $F(\cdot)$ follows from the fact that $\psi_i(\cdot), i=1,\ldots,n$ are continuous (Theorem~\ref{thm1}).

Now $l<m$, and $\sum_{i=1}^{n}q_i<1$, thus $F(\cdot)$increases if $\sum_{i=1}^{n}\psi_i(.),i\in \{1,..,n\}$ increases (as $\psi_i(.)$ is d.f. so it is non-decreasing). 
\end{proof}
Lemma~\ref{fcont} implies that $P(X_m=x)=0$ for each $x$ and thus $r(x) = 1-F(x)$. Hence, 
\begin{align}\label{ex2}
\phi_j(x)=(f_j(x)-c)(1-F(x))
\end{align}
\begin{obs}\label{sup}
Every element in the support set of $\psi_i(\cdot)$ is a best response; thus, so are $L_i, U_i$.
\end{obs}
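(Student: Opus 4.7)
The plan is to prove Observation~\ref{sup} by the standard mixed-equilibrium ``no dominated region in the support'' argument, adapted to our continuous setting, and then to verify separately that $L_i$ and $U_i$ themselves lie in the support.

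First I would establish that $\phi_j(\cdot)$ is continuous on $\Re$ and that its supremum $u_{j,\max}$ is attained. Lemma~\ref{fcont} gives continuity of $F(\cdot)$ on $[c_{\min},v]$, and the paper's setup supplies continuity of $f_j(\cdot)$; combined with the formula $\phi_j(x)=(f_j(x)-c)(1-F(x))$ from (\ref{ex2}) this yields continuity of $\phi_j$. Outside $[g_j(c),v]$ the payoff is non-positive (below $g_j(c)$, since $f_j(x)\leq c$) or identically zero (above $v$, since no secondary buys), so $u_{j,\max}$ equals the maximum of a continuous function on the compact interval $[g_j(c),v]$ and is therefore attained at some $y^*$.

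For the main claim, I assume for contradiction that some $x_0$ in the support of $\psi_j(\cdot)$ satisfies $\phi_j(x_0)<u_{j,\max}$. Setting $\delta=u_{j,\max}-\phi_j(x_0)>0$, continuity of $\phi_j$ produces a neighborhood $N=(x_0-\eta,x_0+\eta)$ throughout which $\phi_j<u_{j,\max}-\delta/2$. A support point must carry strictly positive mass in every open neighborhood -- otherwise deleting that neighborhood from the support would leave a smaller closed set of full probability, contradicting minimality -- so $p:=\Pr_{\psi_j}(N)>0$. Then
\[
E_{\psi_j}[\phi_j]\;\leq\;(1-p)\,u_{j,\max}+p(u_{j,\max}-\delta/2)\;=\;u_{j,\max}-p\delta/2\;<\;\phi_j(y^*),
\]
so the pure deviation to $y^*$ yields strictly higher expected profit than $\psi_j$, contradicting the NE condition of Definition 3. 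Hence every element of the support of $\psi_j$ must be a best response.

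It remains to place $L_i$ and $U_i$ in the support. If $L_i$ were outside it, some open interval $(L_i-\epsilon,L_i+\epsilon)$ would be disjoint from the support and therefore carry zero $\psi_i$-mass, forcing $\psi_i(L_i+\epsilon)=\psi_i(L_i-\epsilon)$; but by definition of $L_i$ the right-hand side equals $0$ while the left-hand side is strictly positive, a contradiction. The argument for $U_i$ is symmetric, using continuity of $\psi_i$ at $U_i$ (Theorem~\ref{thm1}) together with $\psi_i(U_i-\epsilon)<1=\psi_i(U_i)$. Applying the main part of the proof to $L_i$ and $U_i$ then finishes the observation. I do not anticipate a serious obstacle -- the only mild care required is the bookkeeping linking the measure-theoretic support to open-neighborhood mass, which is routine once continuity of $\phi_j$ is secured via Lemma~\ref{fcont}.
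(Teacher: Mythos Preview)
Your proof is correct and a little different in organization from the paper's. The paper argues by a dichotomy: given a putative non-best-response $z$ in the support, either (Case~I) some neighborhood of $z$ contains no best responses, in which case that neighborhood carries zero $\psi_i$-mass and can be excised from the support, contradicting minimality; or (Case~II) every neighborhood contains a best response, so a sequence of best responses converges to $z$ and continuity of $\phi_i$ via Lemma~\ref{fcont} forces $\phi_i(z)=u_{i,\max}$. You instead first establish that $u_{j,\max}$ is actually attained (by restricting to the compact interval $[g_j(c),v]$ and invoking continuity there), then use continuity once to produce a neighborhood of strictly suboptimal penalties carrying positive mass, and finish with an explicit profitable pure-strategy deviation to the maximizer $y^*$. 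Your argument is more self-contained: it makes explicit the profitable-deviation reasoning that the paper's Case~I leaves implicit (why should a neighborhood of non-best-responses have zero mass?). Conversely, the paper's Case~II sequential argument avoids having to check that the supremum is attained. One small wording point: $\phi_j$ need not be continuous on all of $\Re$ (there may be a jump at $v$), but you only use continuity on $[g_j(c),v]$, which is exactly what Lemma~\ref{fcont} delivers, so the argument stands. Your verification that $L_i,U_i$ lie in the support is a useful elaboration; the paper merely asserts it.
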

\begin{proof}
Suppose there exists a point $z$ in the support set of $\psi_i(\cdot)$, which is not a best response. Therefore, primary 	1 plays at $z$ with probability $0$ when channel state is $i$.\\Now,  one of the following two cases must arise.\\
\textit{Case I}: $\exists$ a neighborhood \cite{rudin} of radius $\delta> 0$ around $z$, such that no point in this neighborhood is a best response. Neighborhood of radius $\delta>0$ of $z$ is an open set (theorem 2.19 of \cite{rudin}). Hence, we can eliminate that neighborhood and can attain a smaller closed set, such that its complement has probability zero under $\psi_i(\cdot)$, which is against the definition of support set. \\
\textit{Case II}:  For every $\epsilon>0$, $\exists y\in (z-\epsilon,z+\epsilon)$, such that $y$ is a best response. Then, we must have a sequence $z_k, k=1,2,\ldots$  such that each $z_k$ is a best response, and $\underset{k\rightarrow \infty}{\text{lim}}z_k=z$ \cite{rudin}. But profit to primary 1 for channel state $i$  at each of $z_k$ is $(f_i(z_k)-c)(1-F(z_k))$. Now, from continuity of $f_i(\cdot)$ and $F(\cdot)$ (lemma ~\ref{fcont})-
\begin{align}\label{seq}
 \underset{k\rightarrow \infty}{\text{lim}}\phi_i(z_k)& =(f_i(z_k)-c)(1-F(z_k))\nonumber\\
        & =(f_i(z)-c)(1-F(z))=\phi_i(z)
\end{align}
As each of $z_k, k=1,2,\ldots $ is a best response, $u_{i,max}=\phi_i(z_k), k=1,2,\ldots$. Hence, from (\ref{seq}) $u_{i,max}=\phi_i(z)$ and $z$ is a best response. We can conclude the result by noting that $U_i, L_i$ (Definition \ref{dlu}) are in the support set of $\psi_i(\cdot)$.
\end{proof}

\textit{Proof of Theorem~\ref{thm2}}:
 From Observation~\ref{sup} it is sufficient to   show that for any $x,y$ such that $c_{min}< x<y\leq v$, if $x$ is a best response when the state of the channel is $j$, then $y$ can not be a best response when the state of the channel is $i$ for $i>j$. If not consider $y> x$ such that $x, y$ are the best responses when channel states are respectively  $j, i$. Now, from Observation~\ref{o1} $f_i(y)>c, f_j(x)>c$. Also,
\begin{eqnarray}
u_{i,max} & = & (f_i(y)-c)(1-F(y)) \label{ptt1} \\
\phi_j(y)& = & (f_j(y)-c)(1-F(y))\nonumber\\
        & = & u_{i,max}.\dfrac{f_j(y)-c}{f_i(y)-c} \quad (\text{from} (\ref{ptt1}))\nonumber\\
u_{j,max}& \geq & u_{i,max}.\dfrac{f_j(y)-c}{f_i(y)-c} \label{n41}
\end{eqnarray}
Next,
\begin{eqnarray}
u_{j,max} & = & (f_j(x)-c)(1-F(x))\nonumber\\
\phi_i(x)& = &(f_i(x)-c)(1-F(x))\nonumber\\
             & = & u_{j,max}.\dfrac{f_i(x)-c}{f_j(x)-c}\label{n42}
\end{eqnarray}
Using (\ref{n41}) and (\ref{n42}), we obtain-
\begin{eqnarray}\label{n43}
\phi_i(x)\geq u_{i,max}.\dfrac{(f_j(y)-c)(f_i(x)-c)}{(f_i(y)-c)(f_j(x)-c)}
\end{eqnarray}
But, then, since $y>x$, $i>j$,  (\ref{con1}) implies that $\phi_i(x)>u_{i,max}$
which contradicts the definitions of $u_{i,max}$ and $\phi_i(x)$.
\qed
%From definition \ref{dlu}  and Observation~\ref{sup}, $L_i, U_i$ are the best responses. So, the result follows. \qed\\

\textit{Proof Of Theorem 3}:
Suppose the statement is not true. But, it follows from Theorem~\ref{thm2} that there exists an interval $(x,y) \subseteq [L_n, v]$,
such that no primary offers penalty in the interval $(x,y)$ with positive probability. So, we must have $\tilde{a}$ such that
\begin{eqnarray}
\tilde{a}=\inf\{b\leq x:\psi_{j}(b)=\psi_{j}(x), \forall j\} \nonumber
\end{eqnarray}

By definition of $\tilde{a}$, $\tilde{a}$ is a best response  for at least one state $i$. But, as primaries do not offer penalty in the range $(\tilde{a},y)$, so from \eqref{ex2}, $\phi_i(z) > \phi_i(\tilde{a})$ for
each $z \in (\tilde{a}, y)$. This is because $F(y)=F(\tilde{a})$ and $ f_i(\tilde{a}) < f_i(z)$.  Thus, $\tilde{a}$ can not be  a best response for state i.\qed
%%%%Proofs Of Section 3.B. %%%

\subsection{Proofs of Section ~\ref{sec:computation}}
\textit{proof of Lemma \ref{lu}}:
We first prove \eqref{n51} using induction, (\ref{n52}) follows from (\ref{n51}).

From theorem~\ref{thm3}, $\psi_i(\cdot)$'s support set is $[L_i,L_{i-1}]$ for $i=2,...,n$ and  $[L_1,v]$ for $i=1.$  Thus, $v$ is a best response for channel state  1 (by corollary~\ref{sup}), hence
\begin{align}
u_{1,max}& =(f_1(v)-c)(1-w(\sum_{i=1}^{n}q_i))= p_1-c
\end{align}
Thus,  \eqref{n51} holds for $i=1$ with $L_0=v$.
 Let, \eqref{n51} be true for $i=t<n$. We have to show that \eqref{n51} is satisfied for $i=t+1$ assuming that it is true for $i=t$. Thus,by induction hypothesis,
\begin{align}\label{52a}
& u_{t,max}=p_t-c= (f_t(L_{t-1})-c)(1-w_{t})
\end{align}
Now, $L_t$ is a best response for state $t$, and thus,
\begin{align}\label{n53}
\phi_t(L_t) =  (f_t(L_t)-c)(1-w_{t+1})=p_t-c
\end{align}
Now, as $L_t $ is also a best response for state $t+1$ by theorem~\ref{thm2} and~\ref{thm3}, thus
\begin{eqnarray}
\phi_{t+1}(L_t) = (f_{t+1}(L_t)-c)(1-w_{t+1})=u_{t+1,max}\nonumber
\end{eqnarray}
Thus, $u_{t+1,max}=p_{t+1}-c$ and it satisfies (\ref{n51}). Thus, \eqref{n51} follows from mathematical induction.

\eqref{n52} follows since
$(f_i(L_i)-c)(1-w_{i+1})=p_i-c$
and $g_i(\cdot)$ is the inverse of $f_i(\cdot)$.\qed

\textit{proof of Lemma~\ref{lm:computation}}:
$L_i, L_{i-1}$ are the end-points of the support set of $\psi_i(\cdot)$ from definition \ref{dlu}, and their
values have been computed in  lemma~\ref{lu}.
 We should have for $x<L_i$, $\psi_i(x)=0$ and for $x>L_{i-1}$,$\psi_i(x)=1$.  From theorem~\ref{thm3},  every point $x \in [L_i,L_{i-1}]$ is a best response for state $i$, and hence,
\begin{align}
(f_i(x)-c)(1-w(\sum_{j=i+1}^{n}q_j+q_i.\psi_i(x)))=u_{i,max}=p_i-c. \nonumber
\end{align}
Thus, the expression for $\psi_i(\cdot)$  follows. We conclude the proof by noting that  the domain and range of $w(.)$ is $[0,1]$, and $\dfrac{p_i-c}{f_i(x)-c}<1$ for $x\in [L_i,L_{i-1}]$: so $w^{-1}(.)$ is defined at $1-\dfrac{p_i-c}{f_i(x)-c}$. \qed.

\textit{proof of lemma~\ref{dcont}}:
Note that
\begin{align}
& \psi_i(L_i)=\dfrac{1}{q_i}(w^{-1}(1-\dfrac{p_i-c}{f_i(L_i)-c})-\sum_{j=i+1}^{n}q_j)\nonumber\\
           & =\dfrac{1}{q_i}(w^{-1}(w_{i+1})-\sum_{j=i+1}^{n}q_j)\quad \text{from}(\ref{n52})\nonumber\\
           & =0 \quad (\text{by} (\ref{d5}))
\end{align}
From (\ref{c5}) and (\ref{n51}), we obtain
\begin{align}\label{c7}
& \psi_i(L_{i-1})=\dfrac{1}{q_i}(w^{-1}(1-\dfrac{p_i-c}{f_i(L_{i-1})-c})-\sum_{j=i+1}^{n}q_j)\nonumber\\
& =\dfrac{1}{q_i}(w^{-1}(w_i)-\sum_{j=i+1}^{n}q_j)\nonumber\\
& =\dfrac{1}{q_i}.q_i=1\quad (\text{as} w_i=w(\sum_{j=i}^{n}q_j))
\end{align}
$w(.)$ is continuous, strictly increasing on compact set $[0, \sum_{j=1}^{n}q_j]$, so $w^{-1}$ is also continuous (theorem 4.17 in \cite{rudin}). Also, $\dfrac{p_i-c}{f_i(x)-c}$ is continuous for $x\geq L_i$ as $f_i(x)>c$, so $\psi_i(.)$ is continuous as it is a composition of two continuous functions. Again, $w^{-1}(.)$ is strictly increasing (as $w(\cdot)$ is strictly increasing), $1-\dfrac{p_i-c}{f_i(x)-c}$ is strictly increasing (as $f_i(\cdot)$ is strictly increasing), so $\psi_i(.)$ is strictly increasing on $[L_i,L_{i-1}]$ ( as it is a composition of two strictly increasing functions (theorem 4.7 in \cite{rudin}))\qed.

%%%%%Proof of Section 3.C %%%%%
\subsection{Proof of Section ~\ref{sec:existence}}
First we state and prove a result (observation~\ref{recurse}). Subsequently we prove Theorem~\ref{thm4}.
\begin{obs}\label{recurse}
For $t>s, t,s\in\{1,\ldots,n\}$
\begin{align}\label{s32}
p_t-c=(p_s-c)\prod\limits_{i=s}^{t-1}\dfrac{f_{i+1}(L_i)-c}{f_i(L_i)-c}
\end{align}
\end{obs}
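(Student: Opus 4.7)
The plan is to reduce the claim to a one-step identity and then telescope.

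First I would establish the single-step relation
\[
\frac{p_{i+1}-c}{p_i-c} \;=\; \frac{f_{i+1}(L_i)-c}{f_i(L_i)-c}, \qquad i=1,\ldots,n-1.
\]
For this, observe that by Theorem~\ref{thm3} the point $L_i$ lies in the support set of $\psi_i(\cdot)$ and, since $L_i = U_{i+1}$, also in the support set of $\psi_{i+1}(\cdot)$. Thus by Observation~\ref{sup} it is a best response both for state $i$ and for state $i+1$. Using the formula $\phi_j(x)=(f_j(x)-c)(1-F(x))$ from \eqref{ex2} and the fact that $1-F(L_i)=1-w_{i+1}$ (which was derived in the proof of Lemma~\ref{lu}, cf.~\eqref{n53}), this gives
\begin{align*}
(f_i(L_i)-c)(1-w_{i+1}) &= u_{i,\max} = p_i-c,\\
(f_{i+1}(L_i)-c)(1-w_{i+1}) &= u_{i+1,\max} = p_{i+1}-c.
\end{align*}
Dividing the second equation by the first (note $p_i-c>0$ and $f_i(L_i)-c>0$ by \eqref{imp}) yields the claimed one-step identity.

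Next I would telescope. For $t>s$, multiply the one-step identity over $i=s,s+1,\ldots,t-1$:
\[
\prod_{i=s}^{t-1}\frac{p_{i+1}-c}{p_i-c} \;=\; \prod_{i=s}^{t-1}\frac{f_{i+1}(L_i)-c}{f_i(L_i)-c}.
\]
The left-hand side is a telescoping product equal to $(p_t-c)/(p_s-c)$, and rearranging gives exactly \eqref{s32}.

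There is no real obstacle here; the content is entirely contained in Lemma~\ref{lu} together with the fact that $L_i$ is simultaneously a best response for two consecutive states (which is where Theorems~\ref{thm2} and \ref{thm3} are invoked). Everything else is an elementary telescoping argument.
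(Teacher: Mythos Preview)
Your proof is correct and follows the same approach as the paper: establish the one-step identity $(p_{i+1}-c)/(p_{i}-c)=(f_{i+1}(L_{i})-c)/(f_{i}(L_{i})-c)$ and then telescope. One caveat on the justification: Observation~\ref{recurse} is invoked in the proof of Theorem~\ref{thm4} (existence of the NE), so at this point the candidate profile is not yet known to be an NE and you cannot strictly appeal to Theorems~\ref{thm2},~\ref{thm3} or Observation~\ref{sup}; the paper instead reads the two equations you need directly off the defining relations (\ref{n51}) and (\ref{n52}), which yields the same identities without any appearance of circularity.
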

\begin{proof}
Since $f_i^{-1}(\cdot)=g_i$, thus from (\ref{n52}) we obtain for $i-1$
\begin{align}\label{eq:multi}
p_{i-1}-c=(f_{i-1}(L_{i-1})-c)(1-w_{i})
\end{align}
Hence, from (\ref{n51}), (\ref{imp}), and (\ref{eq:multi})
\begin{align}\label{n52b}
p_{i}-c=(p_{i-1}-c)\dfrac{f_{i}(L_{i-1})-c}{f_{i-1}(L_{i-1})-c} 
\end{align}
We obtain the result using recursion.\end{proof}

\textit{proof of Theorem \ref{thm4}}:
Fix a state $j\in \{1,\ldots,n\}$. First, we show that if a primary follows its strategy profile then it would attain a payoff of $p_j-c$ at channel state $j$. Next, we will show that if a primary unilaterally deviates from its strategy profile, then it would obtain a payoff of at most of $p_j-c$ (Case i and Case ii) when the channel state is $j$.

If state of channel of primary 1 is $i\geq 1$ and it select penalty $x$, then its expected profit is-
\begin{align}\label{e1}
\phi_i(x)& =(f_i(x)-c)r(x)\nonumber\\
& =(f_i(x)-c)(1-w(\sum_{k=1}^{n}q_k.\psi_k(x)))
\end{align}
First, suppose $x\in [L_j,L_{j-1}]$. From (\ref{e1}) and (\ref{c5}), we obtain
\begin{align}\label{s43}
\phi_j(x)&= (f_j(x)-c)(1-w(\sum_{i=1}^{n}q_i\psi_i(x))) \notag\\& =(f_j(x)-c)(1-w(\sum_{k=j+1}^{n}q_k+q_j\psi_j(x)))\nonumber\\
 & =(f_j(x)-c)(1-w(w^{-1}(1-\dfrac{p_j-c}{f_j(x)-c})))\notag\\ 
 & =p_j-c
\end{align}
Since $\psi_i(L_n)=0$ $\forall i$, we have
\begin{align}\label{paylow}
\phi_j(L_n)& =(f_j(L_n)-c)(1-w(0))=f_j(L_n)-c
\end{align}
From (\ref{paylow}) expected payoff to a primary at state $j$ at $L_n$ is $f_j(L_n)-c$. At any $y<L_n$ expected payoff to a primary at state $j$ will be strictly less than $f_j(L_n)-c$. Hence, it suffices to show that for $x\in [L_k,L_{k-1}], k\neq j, k\in\{1,..,n\}$, profit to primary 1 is at most $p_j-c$, when the channel state is $j$. 

Now, let $x\in [L_k,L_{k-1}]$. From (\ref{e1}) expected payoff at $x$
\begin{align}
\phi_j(x)& =(f_j(x)-c)(1-w(\sum_{i=1}^{n}q_i\psi_i(x)))\nonumber\\& =(f_j(x)-c)(1-w(\sum_{i=k+1}^{n}q_i+q_k\psi_k(x)))\nonumber\\
& =(f_j(x)-c)(1-w(w^{-1}(1-\dfrac{p_k-c}{f_k(x)-c})))\notag\\ 
& =\dfrac{(p_k-c)(f_j(x)-c)}{f_k(x)-c}\nonumber
\end{align}
Hence,
\begin{eqnarray}\label{ss2}
\phi_j(x)-(p_j-c)=\dfrac{(p_k-c)(f_j(x)-c)}{f_k(x)-c}-(p_j-c)
\end{eqnarray}
We will show that $\phi_j(x)-(p_j-c)$ is non-positive. As, $k\neq j$, so only the following two cases are possible.\\
\textit{Case i}:  $k<j$\\
From (\ref{con1}), (\ref{imp}) and for $i<j$, we have-
\begin{equation}\label{ss4}
\dfrac{f_i(L_{i-1})-c}{f_i(L_i)-c}>\dfrac{f_j(L_{i-1})-c}{f_j(L_i)-c} (\text{as }  L_i<L_{i-1})
\end{equation}
From observation \ref{recurse} we obtain-
\begin{align}
p_j-c& = \dfrac{(p_k-c)(f_j(L_{j-1})-c)}{f_k(L_k)-c}\prod\limits_{i=k+1}^{j-1}\dfrac{f_i(L_{i-1})-c}{f_i(L_i)-c}\nonumber\\
\end{align}
Using (\ref{ss4}) the above expression becomes
\begin{align}
p_j-c& \geq \dfrac{(p_k-c)(f_j(L_{j-1})-c)}{f_k(L_k)-c}\prod\limits_{i=k+1}^{j-1}\dfrac{f_j(L_{i-1})-c}{f_j(L_i)-c}\nonumber\\
    & =\dfrac{(p_k-c)(f_j(L_{j-1})-c)}{f_k(L_k)-c}.\dfrac{(f_j(L_k)-c)}{f_j(L_{j-1})-c}\nonumber\\
     & =\dfrac{(p_k-c)(f_j(L_k)-c)}{f_k(L_k)-c}\nonumber
\end{align}
Hence, from (\ref{ss2}), we obtain-
\begin{align}\label{ss6}
& \phi_j(x)-(p_j-c)\notag\\& \leq (p_k-c)(\dfrac{f_j(x)-c}{f_k(x)-c}-\dfrac{f_j(L_k)-c}{f_k(L_k)-c})
\end{align}
Since $x\in [L_k,L_{k-1}]$, $j>k$ and $f_k(L_k)>c$ (by (\ref{imp}); hence, from (\ref{ss6}) and  assumption 1, we have-
\begin{eqnarray}\label{s5a}
\phi_j(x)\leq p_j-c
\end{eqnarray}
\textit{Case ii}: $j<k$\\
If $f_j(x)\leq c$ then a primary gets a non-positive payoff at channel state $j$, which is strictly below $p_j-c$. Hence we consider the case when $f_j(x)>c$. Since $x\leq L_{k-1}$ thus $f_j(L_{k-1})>c$.

Now, if $i>j$ and $f_j(L_i)>c$, we have from (\ref{con1}) and (\ref{imp})-
\begin{eqnarray}\label{s46}
\dfrac{f_{i}(L_{i-1})-c}{f_j(L_{i-1})-c}<\dfrac{f_{i}(L_{i})-c}{f_j(L_{i})-c}\quad (\text{as } L_i<L_{i-1})
\end{eqnarray}
Since $f_j(L_{k-1})>c$, thus 
\begin{align}\label{eq:greaterthan0}
f_j(L_i)>c\quad(\text{for } j\leq i<k, \text{as } L_i\geq L_{k-1})
\end{align}
Now, from observation~\ref{recurse} we obtain-
\begin{equation}
\begin{aligned}
p_k-c& =(p_j-c)\prod\limits_{i=j}^{k-1}\dfrac{f_{i+1}(L_i)-c}{f_i(L_i)-c}\nonumber\\
       & =(p_j-c).\dfrac{f_k(L_{k-1})-c}{f_j(L_j)-c}\prod\limits_{i=j+1}^{k-1}\dfrac{f_{i}(L_{i-1})-c}{f_{i}(L_{i})-c}\nonumber\\
     & \leq (p_j-c).\dfrac{f_k(L_{k-1})-c}{f_j(L_j)-c}\prod\limits_{i=j+1}^{k-1}\dfrac{f_j(L_{i-1})-c}{f_j(L_{i})-c}\nonumber\\
     & (\text{from } (\ref{s46}), \& (\ref{eq:greaterthan0}))\\
     & =(p_j-c).\dfrac{f_k(L_{k-1})-c}{f_j(L_j)-c}\dfrac{f_j(L_j)-c}{f_j(L_{k-1})-c}\nonumber\\
     & =(p_j-c).\dfrac{f_k(L_{k-1})-c}{f_j(L_{k-1})-c}
\end{aligned}
\end{equation}	
Thus, from (\ref{ss2}), we obtain-
\begin{align}\label{s5}
& \phi_j(x)-(p_j-c)\notag\\& \leq (p_j-c)(\dfrac{f_k(L_{k-1})-c}{f_j(L_{k-1})-c}.\dfrac{f_j(x)-c}{f_k(x)-c}-1)\nonumber\\
   & \leq 0 (\text{as } x\leq L_{k-1}, j<k \quad\text{and from Assumption 1})
\end{align}
Hence, from (\ref{s5}), (\ref{s5a}), and (\ref{s43}), every $x\in [L_j,L_{j-1}]$ is a best response to primary 1 when channel state is $j$. Since $j$ is arbitrary, it is true for any $j\in\{1,\ldots,n\}$ and thus (\ref{c5}) constitute a Nash Equilibrium strategy profile.\qed

%%%%Proof Of Section IV%%%%%
\subsection{Proofs of Section ~\ref{numerical}}
We will first establish part 1 and 3 of lemma \ref{eff} . Part 2 of lemma \ref{eff} is cumbersome and we defer its proof until the end of the section. Lemma \ref{thresh} will readily follow from part 1 and part 3 of lemma \ref{eff}.
\textit{proof of part 1 of Lemma \ref{eff}}:We first present the essence of the proof. \\
Since a primary can attain at most a payoff of $f_i(v)-c$\footnote{it is attained when primary selects penalty $v$ and its channel is bought with probability 1} at channel state $i$. Thus, we have an upper bound of $R_{NE}$. Since $1-w_i\rightarrow1, i=1,\ldots,n$ a primary also attains at least a payoff of $f_i(v)-c$ at channel state $i$ in the asymptotic limit. Detailed argument follows:

Note that a primary  can achieve profit of at most $f_i(v)-c$, when channel state is $i\geq 1$. Hence,
\begin{align}\label{upn}
R_{NE}\leq \sum_{i=1}^{n}q_i\cdot(f_i(v)-c)
\end{align}
When primary 1 selects penalty $v$ at channel state $i\geq 1$, then its expected profit is $\phi_i(v)=(f_i(v)-c)(1-w_1)$. Now,
from theorem \ref{thm4} under the NE strategy profile,
\begin{align}
p_i-c\geq \phi_i(v)=(f_i(v)-c)(1-w_1)
\end{align}
Hence,
\begin{eqnarray}\label{se5}
R_{NE}\geq l.(\sum_{i=1}^{n}q_i.(f_i(v)-c))(1-w_1)
\end{eqnarray}
Let $Z_i, i=1,..,l-1$ be Bernoulli trials with success probabilities $\sum_{j=1}^{n}q_i$ and $Z=\sum_{i=1}^{l-1}Z_i$; so $P(Z\geq m)$ is equal to $w_1$ by (\ref{d4}) and (\ref{n50}).  Since $m\geq (l-1)(\sum_{i=1}^{n}q_i+\epsilon)$ and $E(Z)=(l-1)\sum_{i=1}^{n}q_i$, by weak law of large numbers \cite{ross}, $w_1\rightarrow 0$ as $l\rightarrow \infty$. Hence the result follows from (\ref{upn}) and (\ref{se5}).\qed.

\textit{proof of part 3 of Lemma \ref{eff}}: We first provide an outline of the proof.

When $m\leq (l-1)(q_n-\epsilon)$ for some $\epsilon>0$, an application of Hoeffding\rq{}s inequality shows that $1-w_n$ approaches $0$ as $l\rightarrow\infty$. Since $1-w_n>1-w_j$  for $j<n$, thus $1-w_j$ approaches $0$ as $l->\infty$. We subsequently obtain upper bounds of $R_{NE}$ in terms of $1-w_j , j=1,2\ldots,n$ which in turn proves the desired result. Detailed argument follows:

Suppose that $m\leq (l-1)(q_n-\epsilon)$, for some $\epsilon>0$. Let, $Z_i, i=1,...,l-1$ be the Bernoulli trials with success probabilities $q_n$ and $Z=\sum_{i=1}^{l-1}Z_i$, $E(Z)=(l-1)q_n$. Hence,
\begin{eqnarray}\label{e12}
1-w_n& \leq&  P(Z\leq m)\nonumber\\
& & \leq P(Z\leq (l-1)(q_n-\epsilon))\nonumber\\
& &   \leq P(|Z-(l-1)q_n|\geq (l-1)\epsilon)\nonumber\\
& & \leq 2\exp(-\dfrac{2(l-1)^2\epsilon^2}{l-1})\nonumber\\
& & (\text{from Hoeffding's Inequality \cite{hoeffding}})\nonumber\\
& & =2\exp(-2(l-1)\epsilon^2)
\end{eqnarray}
Note that 
$1-w_i<1-w_j$ (if $j>i$), $f_k(L_{k-1})>f_{k-1}(L_{k-1})$. Hence, it can be readily seen from(\ref{n51}) that 
\begin{align}
& p_i-c\leq (f_i(L_{i-1})-c)(1-w_n)
\end{align}
Thus,
\begin{align}\label{nz}
R_{NE}& l.\leq (1-w_n)(\sum_{j=1}^{n}q_j.(f_j(L_{j-1})-c))
\end{align}
As $f_i(c)\leq L_i<L_{i-1}\leq v$, hence, the result follows from (\ref{e12}) \qed.

Note that the bound of $R_{NE}$ (from (\ref{e12}) and (\ref{nz})) for $m\leq (l-1)(q_n-\epsilon)$, $\epsilon>0$,
\begin{equation}\label{exnz}
R_{NE}\leq l.\gamma\cdot\exp(-2\epsilon^2.(l-1))
\end{equation}
where $\gamma=2(1-w_n)(\sum_{j=1}^{n}q_j.(f_j(L_{j-1})-c))$. We will use this bound in proving the part 2 of lemma~\ref{thresh}.

From, the definition of $\eta$, it should be clear that
\begin{equation}\label{bound}
\eta\leq 1
\end{equation}

Now, we show lemma \ref{thresh}

\textit{proof of part 1 of  lemma \ref{thresh}}:
First suppose that $m\geq (l-1)(\sum_{i=1}^{n}q_i+\epsilon)$.
From, definition of $R_{OPT}$, it is obvious that 
\begin{eqnarray}\label{e4}
R_{OPT}\leq l\cdot(\sum_{i=1}^{n}(q_i.(f_i(v)-c)))
\end{eqnarray}
Hence the result follows from part 1 of lemma \ref{eff}, (\ref{e4}) and (\ref{bound}).\qed

\textit{proof of part 2 of Lemma \ref{thresh}}:
Suppose that $m\leq (l-1)(q_n-\epsilon)$, for some $\epsilon>0$. We prove the result by showing that $R_{NE}$ decreases at fast rate to $0$ compared to $R_{OPT}$ when $l\rightarrow\infty$.\\
 Let, $Z$ be the number of primaries, whose channel is in state $n$. Hence,
\begin{align}\label{boundropt}
& R_{OPT}\geq E(\min(Z,m))(f_n(v)-c)\nonumber\\& \dfrac{R_{OPT}}{f_n(v)-c}\geq E(\min(Z,m))
\end{align}
Note that $E(Z)=l\cdot q_n$, $Var(Z)=l\cdot q_n(1-q_n)$.\\ We introduce a new random variable $Y$ as follows-
\begin{equation*}
Y=\begin{cases} m, &  \text{if} Z\geq m\\
0, & \text{otherwise} \end{cases}
\end{equation*}
So, 
\begin{align}\label{e10}
E(\min(Z,m))& \geq  E(Y)\nonumber\\
& =m.P(Z\geq m)\nonumber\\
& =m.(1-P(Z<m))\nonumber\\
& \geq m.(1-P(Z\leq (l-1)(q_n-\epsilon)))\nonumber\\
& \geq m.(1-P(|Z-l.q_n|\geq (l-1)\epsilon)\nonumber\\
& \geq m.(1-\dfrac{l.q_n.(1-q_n)}{(l-1)^2.\epsilon^2}) \nonumber\\& (\text{From Chebyshev's Inequality})
\end{align}
Hence, from (\ref{exnz}), (\ref{boundropt}) and (\ref{e10}), we obtain-
\begin{eqnarray}\label{e8}
\eta\leq \dfrac{l.\gamma.\exp(-2(l-1)\epsilon^2)}{m.(1-\dfrac{l.q_n.(1-q_n)}{(l-1)^2.\epsilon^2}).(f_n(v)-c)} \nonumber
\end{eqnarray}
Thus, $\eta$ tends to zero for $m\leq (l-1)(q_n-\epsilon)$,as $l$ tends to infinity (as $m\neq 0$).\qed.

%%% Proof Of lemma part c%%%%%%
\textit{proof of part 3 of Lemma \ref{eff}}:

We show the result by evaluating the expressions for $p_j-c, j=1,\ldots,n$ in the asymptotic limit. Towards this end, we first evaluate the expressions for $w_j$ and $L_j$ in the asymptotic limit. We obtain the expression for $p_j-c$ when we combine those two values.\\
Suppose that $(l-1)(\sum_{j=i-1}^{n}q_j-\epsilon)\geq m\geq (l-1)(\sum_{j=i}^{n}q_j+\epsilon), i\in\{2,\ldots,n\}$ for some $\epsilon>0$. Since $w_i$ is the probability of at least $m$ successes out of $l-1$ independent Bernoulli trials, each of which occurs with probability $\sum_{j=i}^{n}q_j$ (by (\ref{d5})). Hence from the weak law of large numbers \cite{ross}
\begin{align}\label{asy1}
& w_i\rightarrow 0 \quad \text{as } l\rightarrow \infty\nonumber\\
& 1-w_i\rightarrow 1\quad \text{as } l\rightarrow \infty
\end{align}
Since $w_j<w_i$, for $j>i$ (from (\ref{d5})), we have from (\ref{asy1}) for $j\geq i$
\begin{equation}\label{asy2}
1-w_j\rightarrow 1\quad \text{as } l\rightarrow \infty
\end{equation}

Again, as $m\leq(l-1)(\sum_{j=i-1}^{n}q_j-\epsilon)$, so,  from weak law of large numbers\cite{ross},  for every $\epsilon>0$, $\exists L$, such that $1-w_{i-1}<\epsilon$, whenever $l\geq L$. Hence,
\begin{align}\label{asyz1}
1-w_{i-1}& \underset{l\rightarrow \infty}{\rightarrow} 0\nonumber\\
1-w_{j}& \underset{l\rightarrow \infty}{\rightarrow} 0 \quad (\text{for }j< i)
\end{align}
Thus, it is evident from (\ref{n51}) and (\ref{asyz1}) that
\begin{align}\label{asypaylo}
p_j-c\underset{l\rightarrow \infty}{\rightarrow} 0 \quad (\text{for }j<i) 
\end{align}
Thus, from (\ref{n52}), (\ref{asy2}), and (\ref{asypaylo})
\begin{align}\label{asyl}
L_{i-1}\underset{l\rightarrow \infty}{\rightarrow} c_{i-1}
\end{align}
We obtain for $j\geq i$ from (\ref{n51}) 
\begin{align}\label{asyimp}
p_j-c& =(f_j(L_{j-1})-c)(1-w_j)\nonumber\\
p_j& \underset{l\rightarrow \infty}{\rightarrow} f_j(L_{j-1})\quad (\text{from }(\ref{asy2}))
\end{align}
Again, using (\ref{n52}), we obtain for $j\geq i$
\begin{align}\label{asyimp2}
p_j-c& =(f_j(L_j)-c)(1-w_{j+1})\nonumber\\
p_j & \underset{l\rightarrow \infty}{\rightarrow} f_j(L_{j})\quad (\text{from }(\ref{asy2}))
\end{align}
$f_j(\cdot)$ is strictly increasing, thus from (\ref{asyimp}) and (\ref{asyimp2}), $L_j\rightarrow L_{j-1}$ (for $j\geq i$). Hence, for $j\geq i$,
\begin{align}\label{asylj}
L_j & \underset{l\rightarrow \infty}\rightarrow L_{i-1}\nonumber\\
L_j & \underset{l\rightarrow \infty}{\rightarrow} c_{i-1}\quad (\text{from}(\ref{asyl}))
\end{align}                               
Thus, from (\ref{asylj}), and  (\ref{asyimp2}), we obtain for $j\geq i$ 
\begin{align}\label{asyhi}
p_j-c\underset{l\rightarrow \infty}{\rightarrow} (f_j(c_{i-i})-c)
\end{align}
Thus, from (\ref{asypaylo}) and (\ref{asyhi}), we have-
\begin{eqnarray}
R_{NE}\underset{l\rightarrow \infty}{\rightarrow} l.\sum_{j=i}^{n}q_j\cdot(f_j(c_{i-i})-c)\nonumber
\end{eqnarray}
which is equal to the required expression.\qed

\end{document}